\newcommand{\pdf}[0]{f}
\newcommand{\pdfr}[0]{h}
\newcommand{\sign}[0]{\text{sign}}
\def\Rset{\mathbb{R}}
\def\esssup{\operatorname{esssup}}
\newtheorem{theorem}{Theorem}[section]
\newtheorem{lemma}{Lemma}[section]
\newtheorem{definition}{Definition}[section]
\newtheorem{proposition}{Proposition}[section]
\newtheorem{remark}{Remark}[section]
\definecolor{rougeG}{rgb}{.76,0,.12}
\definecolor{vertG}{rgb}{.07,.56,.25}
\def\Rset{\mathbb{R}}
\def\pdf{f}
\numberwithin{equation}{section}
\title{
Sharp informational inequalities involving Kullback-Leibler and Rényi divergences and a family of scaling-invariant relative Fisher measures
}
\author{ {\bf Razvan Gabriel Iagar\footnotemark[1]\footnote{\textit{e-mail}: razvan.iagar@urjc.es}, David Puertas-Centeno\footnotemark[2]\footnote{\textit{e-mail}: david.puertas@urjc.es}} \\ and {\bf Elio V. Toranzo\footnotemark[3]\footnote{\textit{e-mail}: elio.vtoranzo@urjc.es}} \\ \\
	Departamento de Matem\'atica Aplicada, \\ Ciencia e Ingeniería de los Materiales y Tecnología Electrónica\\
	Universidad Rey Juan Carlos \\
	M\'ostoles, 28933, Madrid, Spain}
\date{\today}
\begin{document}
\maketitle

\begin{abstract}

We introduce a new transformation called \emph{relative differential-escort}, which extends the usual differential-escort transformation by relating the change of variable to a reference probability density. As an application of it, we define a biparametric family of \emph{relative Fisher measures} presenting significant advantages with respect to the pre-existing ones in the literature: invariance under scaling changes and, consequently, sharp inequalities between the new relative Fisher measure and the well established Kullback-Leibler and Rényi divergences. We also introduce a biparametric family of \emph{relative cumulative moment-like measures} and we establish sharp lower bounds of these new measures by the Kullback-Leibler and Rényi divergences. The optimal bound and the minimizing densities are given. We also construct a family of inequalities for an arbitrary and fixed minimizing density in which the so-called generalized trigonometric functions plays a central role, providing thus one more interesting application of the newly introduced inequalities and measures.

\end{abstract}

\section{Introduction}
The construction of informational functionals to grasp different aspects of the spreading of a probability density function beyond its moments has been addressed by several approaches, giving rise to a huge variety of such measures. Starting from the basic ones established in the early days of Information Theory such as Fisher information~\cite{Fisher1922}, Shannon entropy~\cite{Shannon1949,Kolmogorov1956}, Rényi entropy~\cite{Renyi1961}, among other extensions of them \cite{Kapur1969,Taneja1989,Taneja1989(libro),Quesada-Taneja1994,Ilic2021,Lutwak2005,Bercher2012a}, the theory advanced towards numerous generalizations in distinct frameworks, such as cumulative residual entropies~\cite{Rao2004}, group entropies~\cite{Tempesta2019} and $\phi-$entropies~\cite{Zozor2015,Toranzo2017}. Recently, interesting extensions of Jensen- and Fisher-like measures based on mixture of densities has been proposed~\cite{Kharazmi2023}.

The intrinsic mathematical properties of the different informational functionals has held a prominent role in many scientific disciplines from both applied and theoretical standpoints. Alongside the development of communication theory itself, another historic example is the Kennard inequality~\cite{Kennard1927}, which can be considered as the mathematical expression of the Heisenberg uncertainty principle, generalized first by Robertson~\cite{Robertson1929} and later by Schrödinger~\cite{Schrodinger1930}. The incursion of Information Theory into this topic from Quantum Mechanics gave rise to the entropic formulation of the uncertainty principle using the Shannon entropy \cite{Bialynicki1975,Bialynicki1984}, which paved the way to generalized versions involving Rényi entropies~\cite{Bialynicki2006} as well as other informational measures, such as the Fisher information ~\cite{Romera2006}. Beyond these uncertainty inequalities, the interconnections among the different informational functionals has given rise to a rich field of research. In this direction, the authors of~\cite{Toranzo2017} proposed the notion of \textit{informational triangle} to refer to the triad  of moments, entropies, and Fisher information measures as well as the links between each pair of them~\cite{Dembo1991}. Relevant examples of that are the informational inequalities connecting moments, entropies and Fisher information measures, such as Stam, Cramér-Rao and moment-entropy inequalities, which establish lower bounds to products of those functionals. All of them are minimized by the Gaussian density, as with the Kennard and Biaylinicky-Birula inequalities. Extensions of these families of inequalities have been studied in the last years~\cite{Lutwak2005,Bercher2012a} involving parametric versions of the classical Fisher information as well as the Rényi entropies, which are minimized by the stretched Gaussian densities, ubiquitous in nonextensive systems~\cite{Tsallis2022b}.

Within the same framework, a parallel task is devoted to quantify the dissimilarity between two (or more) probability density functions by constructing relative versions of the functionals associated to the Shannon and generalized entropies, as well as the Fisher information measure. Nevertheless, up to now this task remains a significant challenge in information theory~\cite{Sason2022}. Among the relative functionals already known in the literature, Kullback-Leibler~\cite{Kullback1951,Lin2002} and Rényi divergences~\cite{Hammad1978} are the most representative and employed measures in a wide variety of areas of science, see for example \cite{Singh1997,Karmeshu2003,Singh2013,Sen2011,Sen2014}. In certain sense, they can be considered as the relative counterparts of Shannon and Rényi entropies. While the mathematical properties of these measures of divergence have been thoroughly studied~\cite{Harremoes2014}, as far as we know, only a few proposals of relative Fisher-like measures can be found in the literature as the counterpart of Fisher-like measures in the \textit{relative framework} \cite{Hammad1978,Antolin2009,Martin2015,Toscani2016(a),Toranzo2017,Toscani2017}. Furthermore, none of them satisfy the property of invariance under scaling transformations, or play a role in informational inequalities involving the Kullback-Leibler or Rényi divergences. We think that a great goal of this work is to address this serious lack by the introduction of a biparametric family of relative Fisher measures which does fulfill the above mentioned properties, and allow to bound Kullback-Leibler and Rényi divergences.

Furthermore, with the aim to reconstruct the informational triangle in the relative framework, we also introduce a biparametric family of functionals which turns out to be a relative version of the recently introduced cumulative moments~\cite{Puertas2025}.We have called them \textit{relative cumulative moments}. Indeed, we  show that they also fulfill the same relevant properties than the here introduced relative Fisher measures, that is, scaling invariance and boundedness by Rényi and Kullback-Leibler divergences. In fact, we manage to transfer the generalized moment-entropy and Stam inequalities derived by Lutwak~\cite{Lutwak2005} and Bercher~\cite{Bercher2012a} to the relative framework. Moreover, on the one hand we establish the optimal bound and the minimizing densities of those inequalities with respect to a reference probability density. On the other hand, we transfer the monotonicity property of the classical Stam product~\cite{Rudnicki2016} to its relative counterpart, namely, we prove the existence of a class of transformations to which the mentioned product is monotone~\cite{Rudnicki2016,Puertas2025}.

In order to tackle the previously mentioned properties we apply alike techniques to the ones used in~\cite{Zozor2017,Puertas2019,Puertas2025} as well as in~\cite{IP2025,IP2025(b)}, which involve the use of transformations the \textit{differential-escort} and \textit{up/down} transformations, respectively. More precisely, we introduce here a new transformation which can be interpreted as the relative counterpart of the differential-escort transformation. Its definition is given with respect to a fixed probability density, where the standard differential-escort case is recovered only when the reference density corresponds to the uniform. Differential-escort and up/down transformations have been shown to be powerful tools in the study of informational inequalities, presenting several advantages with respect to the standard escort transformations~\cite{Tsallis2009}. While differential-escort transformations have allowed to further extend generalized Stam and moment-entropy inequalities~\cite{Zozor2017,Puertas2025}, up/down transformations directly connect moments, entropies and Fisher measures among them (in the sense of~\cite[Lemma 3.1]{IP2025}), but also reveal a mirrored domain of the entropic parameters where the inequalities hold true, in which the minimizing densities are unbounded in the extremes. Even more, these transformations have motivated the definition of new classes of informational functionals which, in turn, allow to establish different upper bounds for the product of the main informational functionals depending on regularity conditions on the probability density~\cite{IP2025(b)}.

Finally, we use the fact that the minimizing densities of the former inequalities depend on an arbitrary probability density to impose a fixed minimizer. Astonishingly, this gives rise to a modified class of relative functionals, which we have named \textit{adapted relative measures}, which involves the so-called \textit{generalized trigonometric functions} in an intricate way. We recall that the generalized trigonometric functions are special functions that have begun to receive increased attention only some decades ago~\cite{Lindqvist1995,Drabek1999} due their role in an eigenvalue problem associated with the $p$-Laplacian operator. Remarkably, they also play a central role in the extension of moment-entropy, Cramér-Rao and Stam inequalities~\cite{Puertas2025,IP2025}. But beyond that, other interesting applications of these functions have been found in theoretical physics~\cite{Shababi2016}, to model the vibration properties of the human teeth~\cite{Cveticanin2020a}, or to express exact solutions of a nonlinear Schrödinger equation~\cite{Gordoa2025}, which makes them a research hotspot.

Returning to the adapted relative measures, it is worth mentioning that, although these new functionals depend on two probability densities, they must not be understood as divergence measures since they are not minimized when both densities are the same, but instead the corresponding product of both functionals does. An outstanding characteristic of the inequalities involving the product of these adapted measures is the fact that, while each element of the inequality is minimized by the same density, the combination of both elements is minimized by the desired density. This behavior is in a stark contrast with the rest of informational inequalities in which each component is minimized by different densities as it is deeper analyzed throughout the paper.

\bigskip

\noindent \textbf{Structure of the paper}. A number of preliminary definitions and results needed in the development of the paper are recalled, for the sake of completeness, to the reader in Section \ref{sec:prelim}. In Section \ref{sec:transf} we introduce the relative differential-escort transformation and its inverse transformation, we study their basic properties, as well as their relation with the Shannon and Rényi entropies and with the Kullback-Leibler and Rényi divergences. Section \ref{sec:infineq} is dedicated to the definition of the biparametric families of relative Fisher measures and relative cumulative moments. We therein explore their basic properties allowing us later to establish the new informational inequalities of moment-entropy and Stam-like types involving the new informational functionals together with the Kullback-Leibler and Rényi divergences. We furthermore extend the monotonicity property (in the sense of~\cite{Rudnicki2016}) of the Fisher-Shannon complexity measure to the relative framework. In Section \ref{sec:inverse}, we tackle the inverse problem in the sense of finding the reference function for a minimizing density, fixed \emph{a priori}. We then show the unexpected connection with the generalized trigonometric functions, which play a fundamental role in the definition of the adapted informational measures. Finally, in Section \ref{sec:conclusion} we give some conclusions and state some open problems to be addressed in the future.

\section{Preliminary notions}\label{sec:prelim}

Throughout this paper, we consider as a general framework two probability density functions $f$ and $h$ such that
\begin{equation}\label{cond:support}
{\rm supp}\,f={\rm supp}\,h=\overline{\Omega}, \quad \Omega=(x_i,x_f)\subseteq\Rset, \quad f(x)>0, \ h(x)>0, \quad {\rm for \ any} \ x\in\Omega,
\end{equation}
where $\Omega$ can be either bounded or unbounded and $\overline{\Omega}$ denotes the closure of the set $\Omega$. Moreover, for $r\in(0,\infty)$, we denote by $\mathcal{S}_r$ the scaling transformation
\begin{equation}\label{eq:scaling}
\mathcal{S}_r[f](x):=rf(rx),
\end{equation}
and we will use the abbreviated notation $f_{(r)}=\mathcal{S}_r[f]$. The following expression will also appear frequently in the sequel:
\begin{equation}\label{eq:xi}
\xi(\lambda,\alpha):=1+\alpha(\lambda-1).
\end{equation}
In the rest of this section, for the sake of completeness, we review the notions, concepts and previously established results that are useful in this work.

\subsection{A review of moments, entropies and classical inequalities}

In this first part, we recall the reader the notions of moments, entropies and Fisher information and the most well known inequalities between them, with the corresponding minimizers.

\medskip

\noindent \textbf{The $p$-th absolute moment} of a probability density function $\pdf$, with $p \geqslant 0$, is given by
\begin{equation}\label{eq:pabs}
\mu_p[\pdf] = \int_\Rset |x|^p \, \pdf(x) \, dx  = \left\langle  \, |x|^p \, \right\rangle_\pdf.
\end{equation}
Related to the $p$-th absolute moment, it is more common to consider the related quantity known as the $p$-th deviation, defined by taking the power $1/p$ in~\eqref{eq:pabs} (including the limiting cases $p\to0$ and $p\to \infty$) as follows:
\begin{equation*}
\begin{split}
&\sigma_p[\pdf] =\left(\int_\Rset |x|^p \, \pdf(x) \, dx \right)^{\frac{1}{p}}, \quad {\rm for} \ p > 0, \\
&\sigma_0[\pdf] = \lim\limits_{p \to 0} \sigma_p[\pdf]  = \exp\left(\int_\Rset \pdf(x) \, \log|x| \, dx \right), \\
&\sigma_{\infty}[\pdf]=\lim\limits_{p\to\infty}\sigma_p[\pdf]=\esssup\big\{|x| : x\in\Rset, \pdf(x) > 0 \big\}.
\end{split}
\end{equation*}

\medskip

\noindent \textbf{Rényi and Tsallis entropies.} The Rényi and Tsallis entropies of $\lambda-$order, $\lambda\neq1$, of a probability density function $\pdf$ defined on $\Rset$ (or on a Lebesgue measurable subset of $\Rset$) are given by
\begin{equation*}
R_\lambda[\pdf] = \frac1{1-\lambda} \log\left( \int_\Rset [\pdf(x)]^\lambda \, dx \right), \quad {\rm and} \quad T_\lambda[\pdf] = \frac1{\lambda-1} \left( 1 - \int_\Rset [\pdf(x)]^\lambda \, dx \right),
\end{equation*}
respectively. Note that the Rényi and Tsallis entropies are one-to-one mapped as follows:
$$
T_{\lambda}[\pdf]=\frac{e^{(1-\lambda) R_{\lambda}[\pdf]} - 1}{1 - \lambda}.
$$
In the limit as $\lambda\to1$ we recover the Shannon entropy
\begin{equation*}
\lim\limits_{\lambda \to 1}R_{\lambda}[\pdf]=\lim\limits_{\lambda \to 1} T_{\lambda}[\pdf]=S[\pdf]=-\int_\Rset \pdf(x) \, \log\pdf(x) \, dx.
\end{equation*}
Throughout this paper, we employ the quantity known as the Rényi entropy power, that is,
$$
N_\lambda[\pdf] = e^{R_{\lambda}[\pdf]} = \left\langle\pdf^{\lambda-1}(x)\right\rangle^\frac1{1-\lambda}_\pdf.
$$
We will also denote by $N[\pdf]=e^{S[\pdf]}$ the Shannon entropy power.

\medskip

\noindent \textbf{$(p,\lambda)-$Fisher information.} The $(p,\lambda)-$Fisher information is a generalization of the usual Fisher information and it was introduced by Lutwak and Bercher~\cite{Lutwak2005, Bercher2012, Bercher2012a}. It applies to derivable probability density functions. More precisely, given $p>1$ and $\lambda \in \Rset\setminus\{0\}$, the $(p,\lambda)-$Fisher information of a probability density function $f$ is defined as
\begin{equation}\label{eq:def_FI}
\phi_{p,\lambda}[\pdf]
\: = \: \left(\int_\Rset \left|\pdf(x)^{\lambda-2} \, \frac{d\pdf}{dx}(x)\right|^{p} \pdf(x) \, dx\right)^{\frac1{p\lambda}}
\: = \: \frac1{|\lambda-1|^{\frac1\lambda}} \, \left\langle \left|\frac {d \pdf^{\lambda-1}}{dx}(x)\right|^p \right\rangle_{\!\pdf}^\frac1{p \lambda},
\end{equation}
when $\pdf$ is differentiable on its support. Observe that the $(2,1)-$Fisher information reduces to the standard Fisher information. Another interesting particular case is the $(1,\lambda)-$Fisher information, corresponding to the total variation of $\frac{\pdf^\lambda}{\lambda}$, provided that $\pdf^\lambda$ is a function with bounded variation.

\medskip

\noindent \textbf{The extended (triparametric) Stam inequality}. This is a generalized inequality, established with three parameters in~\cite{Zozor2017} and extended by two of the authors in their previous work \cite[Theorem 5.1]{IP2025} with the help of the differential-escort transformation. Let $p\geqslant 1$ and $\beta$ be such that
\begin{equation}\label{eq:sign_cond1}
\sign\left(p^*\beta + \lambda-1\right)=\sign\left(\beta+1-\lambda\right)\neq0,
\end{equation}
where $p^*$ designs the Hölder conjugate of $p$. Then, the following generalized Stam inequality holds true for $f: \Rset\mapsto\Rset^+$ absolutely continuous if $1+\beta-\lambda>0$ or for $f:(x_i,x_f)\mapsto \Rset^+$ absolutely continuous on $(x_i,x_f)$ if $1+\beta-\lambda<0$:
\begin{equation}\label{ineq:trip_Stam_extended}
\left(\phi_{p,\beta}[\pdf] \, N_{\lambda}[\pdf] \right)^{\theta(\beta,\lambda)}\: \geqslant \:	\left(\phi_{p,\beta}[g_{p,\beta,\lambda}] \, N_{\lambda}[g_{p,\beta,\lambda}] \right)^{\theta(\beta,\lambda)}\equiv K_{p,\beta,\lambda}^{(1)},
\end{equation}
where $\theta(\beta,\lambda)=1+\beta-\lambda$ and $g_{p,\beta,\lambda}$ is defined in \cite[Section 4]{Puertas2025}\footnote{Throughout the manuscript, the notation used for the lower bound corresponding to the Stam and moment-entropy inequalities, respectively, follows the one introduced in \cite{IP2025}.}. It should be noted that this result has been extended even further to a \textit{mirrored} domain of the parameters~\cite{IP2025}. However, for simplicity and in order to highlight the main results of the work, we refrain here to consider the mirrored domain.

\medskip

\noindent \textbf{The moment-entropy inequality} is an informational inequality relating the R\'enyi entropy power and the $p$-th deviation $\sigma_p$. More precisely, when
\begin{equation}\label{eq:param_clas}
p^*\in[0,\infty ), \quad \lambda>\frac1{1+p^*},
\end{equation}
and $f$ is any probability density function, it was established in~\cite{Lutwak2004, Lutwak2005, Bercher2012} that
\begin{equation}\label{ineq:bip_E-M}
\frac{\sigma_{p^*}[\pdf]}{N_{\lambda}[\pdf]} \, \geqslant \, \frac{\sigma_ {p^*}[g_{p,\lambda}]}{N_{\lambda}[g_{p,\lambda}]}\equiv K^{(0)}_{p,\lambda}.
\end{equation}
The minimizers to Eq. \eqref{ineq:bip_E-M} are the stretched Gaussians $g_{p,\lambda}$,  given by
\begin{equation}\label{def:g_plambda}
g_{p,\lambda}(x) \, = \, \frac{a_{p,\lambda}}{\exp_\lambda\left( |x|^{p^*} \right)}
\, = \, a_{p,\lambda}\, \exp_{2-\lambda}\left( - |x|^{p^*} \right),
\end{equation}
where $\exp_\lambda$ is the generalized Tsallis exponential
\begin{equation}\label{def:q-exp}
\exp_\lambda(x) = \left( 1 + (1 - \lambda) \, x \right)_+^\frac1{1-\lambda}, \ \ \lambda \ne 1, \qquad \exp_1(x) \: \equiv \: \lim_{\lambda \to 1} \, \exp_\lambda(x) \: = \: \exp(x),
\end{equation}
and the normalization constants $a_{p,\lambda}$ have a rather complicated explicit value which we omit here (see \cite{IP2025}).

\subsection{A brief recall of R\'enyi divergences}

This second preliminary part is dedicated to a short presentation of the more specific objects related to the theme of this paper, the divergences, functionals designed to measure the ``distance" or dissimilarity between two probability density functions. We thus recall next the definitions of the R\'enyi and Kullback-Leibler divergences of two density functions, following \cite{Harremoes2014}.
\begin{definition}[$\xi$-R\'enyi divergence]
Let $f$ and $h$ be two probability density functions satisfying \eqref{cond:support} and $\xi>0$ be a positive real number. Then, the $\xi$-Rényi divergence (or mutual information) is defined as
	\begin{equation}\label{def:renyi_mutual}
	D_\xi[\pdf||h]=\frac1{\xi-1}\log \int_{\Omega} [\pdf(t)]^{\xi}[h(t)]^{1-\xi}\,dt,\quad \xi\neq 1.
	\end{equation}
In the limit $\xi\to1$ we recover the Kullback-Leibler divergence defined as follows
	\begin{equation}\label{def:kullback-leibler}
	D[f||h]:= \int_{\Omega}  f(t)\log\left[\frac{f(t)}{h(t)}\right]\,dt.
	\end{equation}
\end{definition}
Throughout the paper we also employ the exponential of the R\'enyi divergence and thus adopt the following notation
\begin{equation}\label{eq:defK}
K_{\xi}[f||h]:=e^{(\xi-1)D_\xi[f||h]}=\int_{\Omega} [\pdf(t)]^{\xi}[h(t)]^{1-\xi}\,dt,\quad \xi\neq1,
\end{equation}
which is related to the Chernoff $\alpha$ distance, also called the skewed Bhattacharyya distance, see for example \cite{Chernoff1952,Yamano2021}.
It is obvious that $K_0[f||h]=1$ for any pair of densities $f$, $h$, and if $\xi\neq0$, then $D_\xi[f||h]=0$ if and only if $f=h$. Moreover, as both $f$ and $h$ are taken to be positive in $\Omega$, one can extend the definition of $K_{\xi}$ to any real number $\xi$, if the integral in the right hand side of \eqref{eq:defK} is finite.

\begin{remark}[Positivity of the Rényi divergence]
Given any pair of probability densities $f$, $h$ satisfying \eqref{cond:support} and $\xi>0$, the Rényi and Kullback-Leibler divergences are positive, that is,
$$
D_\xi[f||h]> 0,
$$
and the equality is achieved if and only if $f=h$. This implies that
$$
K_{\xi}[f||h]>1 \quad {\rm for \ any} \ \xi>1, \qquad  K_{\xi}[f||h]\in(0,1) \quad {\rm for \ any} \ \xi\in(0,1).
$$
\end{remark}
It is worth noticing that the Rényi and Kullback-Leibler divergences are invariant under scaling transformations acting on both probability densities $f$ and $g$, that is,
$$
D_\xi[f_{(r)}||g_{(r)}]=D_\xi[f||g], \quad r>0.
$$ 	

\subsection{Generalized trigonometric functions}\label{sec:GTFs}

We end the section devoted to a recall of the notions and results employed throughout the paper by a brief recap of the generalized trigonometric functions. Although historically the generalized trigonometric functions, as well as their hyperbolic counterparts, have been first introduced with a single parameter by Shelupsky~\cite{Shelupsky1959}, and Lindqvist~\cite{Lindqvist1995}, the $(p,q)$-generalized trigonometric functions $\sin_{p,q},\cos_{p,q},\tan_{p,q}$ involved in this work and their respective hyperbolic counterparts were first defined, as far as we know, by Drábek and Manásevich who related them with an eigenvalue problem involving the p-Laplacian \cite{Drabek1999}. The $(p,q)$-sine function is defined as the inverse of
\begin{equation} \label{arcsin_def}
\text{arcsin}_{p,q} (z)=\int_0^z(1-t^q)^{-\frac1p}dt=z\,_2F_1\left(\frac 1p,\frac1q;1+\frac 1q;z^q\right),  \quad p,q>1,
\end{equation}
which, as we see, can also be expressed using the Gaussian hypergeometric function (see for example \cite{Edmunds2012, Bhayo2012}). The cosine function is defined as the derivative of the sine function
\begin{equation*}
\cos_{p,q}(z)=\frac{d}{dz}\sin_{p,q}(z).
\end{equation*}
From these definitions, and applying the inverse function rule, the following Pythagorean-like identity is obtained:
\begin{equation}\label{eq:pyth}
\cos_{p,q}^p(z)+\sin_{p,q}^q(z)=1.
\end{equation}
In the hyperbolic case one has
\begin{equation}\label{arcsinh_def}
\text{arcsinh}_{p,q} (z)=\int_0^z(1+t^q)^{-\frac1p}dt=z\,_2F_1\left(\frac 1p,\frac1q;1+\frac 1q;-z^q\right),
\end{equation}
and it follows that
\begin{equation}\label{eq:pythh}
\cosh^p_{p,q}(z)-\sinh^q_{p,q}(z)=1,
\end{equation}
where
\begin{equation*}
\cosh_{p,q}(z)=(\sinh_{p,q}(z))'.
\end{equation*}
A more detailed description of the properties of the generalized trigonometric functions and more applications of them can be found in very recent papers such as \cite{Gordoa2025, Puertas2025}.

\section{The relative differential-escort transformation}\label{sec:transf}

In this section we introduce the main object of study in the present work, which is a new mapping between probability density functions that we call \emph{the relative differential-escort transformation}, as it is an extension to the relative framework of the standard differential-escort transformation. We also construct its inverse and study its basic properties.

\subsection{Main notions}

The relative differential-escort transformation is defined as follows:
\begin{definition}\label{def:transf}
Let $f$ and $h$ be two probability density functions satisfying \eqref{cond:support} and $\alpha\in\Rset$. We define the \textit{relative differential-escort} transformed density of $\alpha$-order of $f$ as
\begin{equation}\label{eq:transf}
\mathfrak R_\alpha^{[\pdfr]}[\pdf](y):=\left(\frac{\pdf(x(y))}{\pdfr(x(y))}\right)^\alpha,\quad y'(x)=\pdf(x)^{1-\alpha}\pdfr(x)^{\alpha}.
\end{equation}
\end{definition}
For simplicity, we also employ throughout the paper the alternative notation
$$
f_\alpha^{[h]}(y)\equiv \mathfrak R_\alpha^{[\pdfr]}[\pdf](y).
$$
Observe first that $\mathfrak R_\alpha^{[\pdfr]}[\pdf]$ is a probability density function. Indeed,
$$
\int_{\Rset}f_\alpha^{[h]}(y)\,dy=\int_{\Rset}\frac{h^{\alpha}(x)}{f^{\alpha}(x)}f^{1+\alpha}(x)h^{-\alpha}(x)\,dx=\int_{\Rset}f(x)\,dx=1.
$$
Notice that, if we let $h\equiv1$ in $\Omega$, we are left with the differential-escort transformation of $\alpha$-order, denoted by $\mathfrak E_\alpha$, considered in previous works such as \cite{Puertas2017, Puertas2019, Zozor2017}. This particular case motivates the name given to this generalized transformation depending on a reference function. Let us remark here that the composition of the $\beta$-order differential-escort transformation with the $\alpha$-order relative differential-escort gives the $\alpha\beta$-order relative differential-escort transformation, more precisely:
\begin{equation}\label{eq:comp}
\mathfrak E_{\beta}\mathfrak R_\alpha^{[h]}[f]=\left(f_\alpha^{[h]}\right)^\beta=\left(\frac{f}{h}\right)^{\alpha\beta}=\mathfrak R_{\alpha\beta}^{[h]}[f].
\end{equation}
We also notice that the previous definition depends on an integration of the new independent variable, thus it is defined up to a translation with an integration constant. Assuming, for example, as a ``canonical election" that
$$
y(x)=\int_{x_i}^x f^{1-\alpha}(t)h^{\alpha}(t)\,dt,
$$
we readily get that $y(x_i)=0$, $y(x_f)=y_f>0$, hence the support of the transformed density $f_\alpha^{[h]}(y)$ is the interval $(0,y_f)$ (where $y_f$ might be finite or infinite, depending on whether the previous integral is convergent or divergent as $x\to x_f$). However, we shall denote by $\widetilde{\Omega}_{\alpha}$ the domain of the transformed density $f_\alpha^{[h]}$ in the $y$ variable.
\begin{remark}
Let us observe that the support $\widetilde\Omega_\alpha$ is finite if and only if $K_\alpha[h||f]$ is finite. In fact,
\begin{equation}
L(\widetilde{\Omega}_\alpha)=K_\alpha[h||f].
\end{equation}

\end{remark}
The next result deals with the derivative of the transformed density $f_\alpha^{[h]}$.
\begin{lemma}\label{lem:deriv}
If the densities $f$ and $h$ satisfying \eqref{cond:support} are both derivable, then $f_\alpha^{[h]}$ is derivable as well and
\begin{equation}\label{eq:deriv_transf}
\frac{df_\alpha^{[h]}}{dy}(y(x))=\frac{\alpha f^{2\alpha-1}(x)}{h^{2\alpha}(x)}\left(\frac{f'(x)}{f(x)}-\frac{h'(x)}{h(x)}\right).
\end{equation}
\end{lemma}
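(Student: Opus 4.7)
The plan is to compute the derivative directly via the chain rule, exploiting the fact that the change of variable $y=y(x)$ is well defined and differentiable since, by hypothesis \eqref{cond:support}, both $f$ and $h$ are strictly positive on $\Omega$, and hence $y'(x)=f(x)^{1-\alpha}h(x)^\alpha>0$. This monotonicity ensures that $x(y)$ is a well-defined $C^1$ inverse on $\widetilde{\Omega}_\alpha$, so the differentiability of $f_\alpha^{[h]}$ will follow as a byproduct from the differentiability of $f$ and $h$ together with the chain rule.

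First I would write $f_\alpha^{[h]}(y)=\varphi(x(y))$ with $\varphi(x)=(f(x)/h(x))^\alpha$, and apply the chain rule to obtain
$$
\frac{df_\alpha^{[h]}}{dy}(y(x))=\varphi'(x)\cdot\frac{dx}{dy}=\frac{\varphi'(x)}{y'(x)}=\frac{\varphi'(x)}{f(x)^{1-\alpha}h(x)^\alpha}.
$$
A direct differentiation of $\varphi$ gives $\varphi'(x)=\alpha(f/h)^{\alpha-1}(f'h-fh')/h^2$, and substituting this expression, together with the explicit form of $y'(x)$ supplied by Definition~\ref{def:transf}, produces after simplification exactly the right-hand side of \eqref{eq:deriv_transf}, which can be recast in the compact form $\alpha f^{2\alpha-1}h^{-2\alpha}(f'/f-h'/h)$ by factoring out $fh$.

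There is essentially no obstacle in this proof beyond bookkeeping of the exponents of $f$ and $h$; the main point one should be careful about is verifying that $y$ is a strict monotone change of variable so that the chain rule applies on the whole of $\Omega$, which as noted above is guaranteed by \eqref{cond:support}. As a sanity check, setting $h\equiv 1$ should recover the known derivative formula for the standard differential-escort transformation $\mathfrak{E}_\alpha$, since in that case $(h'/h)\equiv 0$ and $h^{2\alpha}\equiv 1$, leaving $\alpha f^{2\alpha-1}(f'/f)=\alpha f^{2\alpha-2}f'$, which matches the expression available in the previous works \cite{Puertas2017, Puertas2019, Zozor2017}. This consistency check would close the argument.
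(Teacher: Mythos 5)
Your proposal is correct and follows essentially the same route as the paper: a direct chain‑rule computation of $\frac{d}{dy}(f/h)^\alpha$ using $x'(y)=1/y'(x)=f^{\alpha-1}h^{-\alpha}$, with the exponent bookkeeping working out exactly as you describe. The added remarks on the strict monotonicity of $y(x)$ and the consistency check at $h\equiv 1$ are sensible but do not change the substance of the argument.
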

\begin{proof}
We proceed by direct calculation:
\begin{equation*}
\begin{split}
\frac{df_\alpha^{[h]}}{dy}(y(x))&=\alpha\left(\frac{f(x)}{h(x)}\right)^{\alpha-1}\frac{d}{dx}\left(\frac{f}{h}\right)x'(y)\\
&=\alpha\left(\frac{f(x)}{h(x)}\right)^{\alpha-1}\frac{f'(x)h(x)-h'(x)f(x)}{h^2(x)}\frac{f^{\alpha-1}(x)}{h^{\alpha}(x)}\\
&=\frac{\alpha f^{2\alpha-1}(x)}{h^{2\alpha}(x)}\left(\frac{f'(x)}{f(x)}-\frac{h'(x)}{h(x)}\right),
\end{split}
\end{equation*}
as claimed.
\end{proof}

\bigskip

\noindent \textbf{Construction of the inverse transformation.} Another basic property that a well defined transformation should have is an inverse, as one would like to be able to recover the initial information from the result obtained by applying the transformation. In order to define the inverse, let us first observe that, given any probability density function $g$ and $\gamma\in\Rset\setminus\{0,1\}$ with the property that $N_{\gamma}[g]<\infty$, there is a unique scaling parameter $r=N_{\gamma}[g]$ such that
$$
\int_{\Rset}\mathcal{S}_{r}[g]^{\gamma}(x)\,dx=1.
$$
Indeed, a direct change of variable gives
$$
\int_{\Rset}\mathcal{S}_{r}[g]^{\gamma}(y)\,dy=\int_{\Rset}r^{\gamma}g^{\gamma}(rx)\,dx=r^{\gamma-1}\int_{\Rset}g^{\gamma}(x)\,dx,
$$
whence we obtain that
\begin{equation}\label{eq:interm3}
r=\left[\int_{\Rset}g^{\gamma}(x)\,dx\right]^{\frac{1}{1-\gamma}}=N_{\gamma}[g].
\end{equation}
With this in mind, we introduce the following definition:
\begin{definition}\label{def:inverse}
Given a probability density $h$ as reference function, the inverse of the relative differential-escort transformation is defined as follows: for any $\alpha\in\Rset\setminus\{1\}$ and any probability density function $g$ such that
\begin{equation}\label{cond:inverse}
r(\alpha,g):=N_{\frac{\alpha-1}{\alpha}}[g]=N_{\frac{1}{\alpha^*}}[g]<\infty,
\end{equation}
we define
\begin{equation}\label{eq:inversegen}
\mathfrak{R}^{-1,[h]}_{\alpha}[g](x)=\mathcal{S}_{r(\alpha,g)}[g](y(x))^{\frac{1}{\alpha}}h(x), \quad y'(x)=\mathcal{S}_{r(\alpha,g)}[g](y(x))^{-\frac{1}{\alpha^*}}h(x),
\end{equation}
which can be equivalently written as
\begin{equation}\label{eq:inversegen2}
\mathfrak{R}^{-1,[h]}_{\alpha}[g](x)=[rg(ry(x))]^{\frac{1}{\alpha}}h(x), \quad y'(x)=[rg(ry(x))]^{-\frac{1}{\alpha^*}}h(x),
\end{equation}
with $r=r(\alpha,g)$. For $\alpha=1$ and any compactly supported probability density function $g$, we define
\begin{equation}\label{eq:inverse}
\mathfrak{R}^{-1,[h]}_{1}[g](x)=\mathcal{S}_{r(1,g)}[g](y(x))h(x), \quad y'(x)=h(x),
\end{equation}
where $r(1,g)$ is the length of the support of $g$.
\end{definition}
Note that the implicit change of the independent variable in \eqref{eq:inversegen} can be equivalently written as
$$
\int \mathcal{S}_{r(\alpha,g)}[g]^{\frac{1}{\alpha^*}}\,dy=\int h(x)\,dx,
$$
which entails in particular that the inverse is well defined only for functions $g$ such that
$$
\int_{\widetilde{\Omega}}g(y)^{\frac{1}{\alpha^*}}\,dy<\infty
$$

Note also that the election of the scaling parameter $r(\alpha,g)$ implies that
$$
\int_\Rset  \mathcal{S}_{r(\alpha,g)}[g]^{\frac{\alpha-1}{\alpha}}\,dy=\int_\Rset h(x)dx=1.
$$
The latter discussion justifies the condition of finiteness of the R\'enyi entropy given in \eqref{cond:inverse}. A similar argument justifies that the inverse for $\alpha=1$ is well defined only for compactly supported densities. We infer from \eqref{eq:comp} that
$$
\mathfrak{R}^{-1,[h]}_{\alpha\gamma}\mathfrak{E}_{\gamma}\mathfrak{R}^{[h]}_{\alpha}=\mathcal{I},
$$
and we deduce after straightforward manipulations that
\begin{equation}\label{eq:comp_inverse}
\mathfrak{R}^{-1,[h]}_{\beta}\mathfrak{E}_{\gamma}=\mathfrak{R}^{-1,[h]}_{\frac{\beta}{\gamma}}.
\end{equation}
In particular, letting $\beta=1$ and renaming $\gamma$ instead of $1/\gamma$ in \eqref{eq:comp_inverse}, we arrive at
\begin{equation}\label{eq:comp_inverse1}
\mathfrak{R}^{-1,[h]}_{\gamma}=\mathfrak{R}^{-1,[h]}_{1}\mathfrak{E}_{\frac{1}{\gamma}}, \quad {\rm for \ any} \ \gamma>0.
\end{equation}
We believe that the scaling change taking as scaling parameter exactly the R\'enyi entropy power of the density (provided it is finite) in order to define the inverse is a very interesting and unexpected feature of the relative differential-escort transformation.

We highlight below the fact that a slight modification of the transformation might be employed in order to force that the length of the support is kept invariant.
\begin{remark}
Whenever the relative differential-escort transformation preserves the compact or non-compact character of the support of the original probability density function, the transformed function can be scaled in order to keep invariant its support. For example, if the support $\Omega$ of both $f$ and $h$ is a compact set, easy manipulations show that the previous fact is achieved by performing a scaling change of order
$$
\mathcal A=\frac{D_\alpha[h||f]}{L(\Omega)},
$$
provided that $D_\alpha[h||f]<\infty$, where $L(\Omega)$ is the length of the common support $\Omega$. Thus, the following transformation
$$
\overline {\mathfrak R}_\alpha^{[h]}=\mathcal A\overline {\mathfrak R}_\alpha^{[h]}=\mathcal A\left(\frac{f(x)}{h(x)}\right)^\alpha
$$
defines a rearrangement (in the sense of preserving the support and the value of the integral of the initial function on every subset of the support) for each value of $\alpha$ such that $D_\alpha[h||f]<\infty$ if the support of the probability density $h$ is compact. A similar construction, whose details we omit here, takes place when $\Omega$ is not compact and $D_\alpha[h||f]=\infty$.
\end{remark}

\subsection{An easy example: exponential densities}

In order to show how the relative differential-escort transformation works, let us give a practical example by considering a density $f(x)=ae^{-ax}$ for some $a\in\Rset^+$ and $h(x)=e^{-x}$, both of them supported on $\Omega=(0,\infty)$. We infer then from \eqref{eq:transf} that
$$
f^{[h]}_\alpha(y)=\left(\frac{ae^{-ax}}{e^{-x}}\right)^\alpha=a^\alpha e^{-\alpha (a-1)x},
$$
with
$$
y'(x)=f(x)^{1-\alpha}h(x)^\alpha=a^{1-\alpha}e^{-(1-\alpha)ax}e^{-\alpha x}=\frac {e^{-(a+\alpha-a\alpha)x}}{a^{\alpha-1}}.
$$
Then
$$
y(x)=\int_0^x \frac {e^{-(a+\alpha-a\alpha)t}}{a^{\alpha-1}}dt=\frac {1-e^{-(a+\alpha-a\alpha)x}}{a^{\alpha-1}(a+\alpha-a\alpha)},
$$
and we obtain
$$
e^{-(a+\alpha-a\alpha)x}=a^{\alpha-1}(a\alpha-a-\alpha)\,y+1
=a^{\alpha}\left(\alpha-1-\frac{\alpha}a\right)\,y+1=a^{\alpha}\left(\frac{\alpha}{a^*}-1\right)\,y+1.
$$
We can thus replace the previous expression and continue the calculation of the transformed density:
$$
f^{[h]}_\alpha(y)=a^\alpha \left(e^{-(a+\alpha-a\alpha)x}\right)^{\frac{\alpha (a-1)}{a+\alpha-a\alpha}}
=a^\alpha\left[a^{\alpha}\left(\frac{\alpha}{a^*}-1\right)\,y+1\right]^{\frac{\alpha (a-1)}{a+\alpha-a\alpha}}.
$$
Note that, if the following condition
$$
a+\alpha-a\alpha=a\left(1-\frac\alpha{a^*}\right)>0,\quad {\rm or \ equivalently,}\quad \alpha<a^*,
$$
is fulfilled, then the support of $f_\alpha^{[h]}(y)$ is compact. On the contrary, when $\alpha>a^*$ the support is infinite and with a power-law tail. Finally, in the case $\alpha=a^*$, we obtain again an exponential density by taking into account that the change of variable reduces to a trivial scaling.

\subsection{Entropies and divergences}

We give here an interesting connection between the R\'enyi entropy power and the R\'enyi divergence, achieved through the relative differential-escort transformation. In particular, a quite surprising connection between the Shannon entropy and the Kullback-Leibler divergence is established.
\begin{lemma}\label{lem:Renyi}
	Let $f$ and $h$ be two probability density functions satisfying \eqref{cond:support}, $\alpha\in\Rset$ and $\lambda\in\Rset\setminus\{1\}$. Then
	\begin{equation}\label{eq:Renyi}
	N_{\lambda}^{1-\lambda}[f_\alpha^{[h]}]=K_{\xi(\lambda,\alpha)}[f||h], \quad \xi(\lambda,\alpha)=1+\alpha(\lambda-1).
	\end{equation}
	For the Shannon entropy, we find
	\begin{equation}\label{eq:Shannon}
	S[f_\alpha^{[h]}]=-\alpha D[f||h].
	\end{equation}
\end{lemma}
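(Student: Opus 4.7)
The proof is essentially a change-of-variables computation made possible by the very definition of the transformation, so the plan is direct.

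The plan is to start from the definition of the Rényi entropy power in the form $N_\lambda^{1-\lambda}[g]=\int g^\lambda\,dy$ applied to $g=f_\alpha^{[h]}$, and then pull the integral back from the $y$ variable to the $x$ variable using the change of variable built into Definition~\ref{def:transf}. Concretely, I would write
\begin{equation*}
N_\lambda^{1-\lambda}[f_\alpha^{[h]}]=\int_{\widetilde\Omega_\alpha}\bigl(f_\alpha^{[h]}(y)\bigr)^\lambda\,dy
=\int_{\Omega}\left(\frac{f(x)}{h(x)}\right)^{\alpha\lambda} f(x)^{1-\alpha}h(x)^{\alpha}\,dx,
\end{equation*}
using $f_\alpha^{[h]}(y(x))=(f/h)^\alpha$ and $dy=f^{1-\alpha}h^\alpha\,dx$.

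The key step is then purely algebraic: collect the exponents of $f$ and $h$ in the integrand. The exponent of $f$ becomes $\alpha\lambda+1-\alpha=1+\alpha(\lambda-1)=\xi(\lambda,\alpha)$, while the exponent of $h$ becomes $-\alpha\lambda+\alpha=1-\xi(\lambda,\alpha)$. Recognising the resulting integrand as $f^{\xi(\lambda,\alpha)}h^{1-\xi(\lambda,\alpha)}$ and invoking the definition \eqref{eq:defK} of $K_\xi[f\|h]$ gives \eqref{eq:Renyi} immediately.

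For the Shannon identity \eqref{eq:Shannon} I would follow the same pattern. Apply the change of variable $dy=f^{1-\alpha}h^\alpha\,dx$ in $S[f_\alpha^{[h]}]=-\int f_\alpha^{[h]}\log f_\alpha^{[h]}\,dy$, and use the crucial simplification
\begin{equation*}
f_\alpha^{[h]}(y(x))\cdot y'(x)=\left(\frac{f}{h}\right)^\alpha f^{1-\alpha}h^\alpha=f(x),
\end{equation*}
together with $\log f_\alpha^{[h]}=\alpha\log(f/h)$. The factor $\alpha$ then pulls outside and the remaining integral is exactly the Kullback--Leibler divergence $D[f\|h]$ from \eqref{def:kullback-leibler}, producing $-\alpha D[f\|h]$. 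As a consistency check, this can also be recovered from \eqref{eq:Renyi} by writing $N_\lambda^{1-\lambda}=e^{(1-\lambda)R_\lambda}$, dividing by $1-\lambda$, passing to the limit $\lambda\to1$ on both sides, and using that $\xi(\lambda,\alpha)\to1$ with $\xi-1=\alpha(\lambda-1)$.

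No real obstacle is expected: both identities reduce to the bookkeeping of exponents after a single change of variable, and the hypothesis \eqref{cond:support} (positivity of $f$ and $h$ on $\Omega$) ensures that the map $x\mapsto y$ is a genuine $C^0$ (resp.\ $C^1$ under Lemma~\ref{lem:deriv}) diffeomorphism between $\Omega$ and $\widetilde\Omega_\alpha$, so the change of variable is legitimate. The only mild care point is that the identities are stated as equalities of (possibly infinite) nonnegative quantities: either both sides are finite and equal, or both are $+\infty$, and the pointwise identification of the integrands above establishes this dichotomy simultaneously.
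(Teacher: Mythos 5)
Your proposal is correct and follows essentially the same route as the paper: a single change of variables $dy=f^{1-\alpha}h^{\alpha}\,dx$ using $f_\alpha^{[h]}(y(x))=(f/h)^{\alpha}$, followed by collecting exponents to recognise $f^{\xi}h^{1-\xi}$, and the analogous computation with $\log f_\alpha^{[h]}=\alpha\log(f/h)$ for the Shannon case. The extra remarks on the $\lambda\to1$ consistency check and on the possibly infinite values are sensible additions but not needed.
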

\begin{proof}
	Recalling the definitions in \eqref{eq:transf} and \eqref{eq:defK}, we have
	\begin{equation*}
	\begin{split}
	N_{\lambda}^{1-\lambda}[f_\alpha^{[h]}]&=\int_{\widetilde{\Omega}}(f_\alpha^{[h]})^{\lambda}(y)\,dy
	=\int_{\Omega}\left(\frac{f(x)}{h(x)}\right)^{\alpha(\lambda-1)}f(x)\,dx\\
	&=\int_{\Omega}f(x)^{1+\alpha(\lambda-1)}h(x)^{\alpha(1-\lambda)}\,dx=K_{\xi(\lambda,\alpha)}[f||h],
	\end{split}
	\end{equation*}
	leading to \eqref{eq:Renyi}. Performing an analogous calculation with the Shannon entropy, we obtain
	\begin{equation*}
	\begin{split}
	S[f_\alpha^{[h]}]&=-\int_{\widetilde{\Omega}}f_\alpha^{[h]}(y)\log f_\alpha^{[h]}(y)\,dy=-\int_{\Omega}f(x)\log\left(\frac{f(x)}{h(x)}\right)^{\alpha}\,dx\\
	&=-\alpha\int_{\Omega}f(x)\log\frac{f(x)}{h(x)}\,dx=-\alpha D[f||h]
	\end{split}
	\end{equation*}
	and the proof is complete.
\end{proof}

\medskip

\noindent

\textbf{Entropic relative moment problem.} The previous result has a remarkable consequence related to the moment problem. Let $\{N_i\}_{i}$ be an entropy sequence in the sense of~\cite{IP2025}, that is, there exists a decreasing probability density function $f$ such that $N_i[f]=N_i$. Taking $h$ fixed with the same support as $f$, then the set of Rényi divergences $\{D_i[f||h]\}_i$ also characterizes the probability density $f$, as follows from the identity $D_{i}[f||h]=N_i[\mathfrak R_1^{[h]}[f]],$ and the fact that the transformation $\mathfrak R^h_1$ is invertible.

\medskip

With the definitions and basic properties given in this section in mind, we are in a position to introduce several new informational functionals and establish sharp informational inequalities involving them.

\section{New functionals and informational inequalities}\label{sec:infineq}

This section is dedicated to the statement and proof of new informational inequalities derived with the aid of the relative differential-escort transformation. In order to state the inequalities, we need to define first some informational functionals that will play an important role.

\subsection{Informational functionals}

Motivated by the relative differential-escort transformation, we introduce two new functionals generalizing the Fisher information and the cumulative moment. The first of them is defined below.
\begin{definition}\label{def:relativeFD}
Let $f$, $h$ be two probability density functions satisfying \eqref{cond:support}, being both derivable on their support, and let $(p,\lambda)\in\Rset^2$ be such that $p>1$ and $\lambda\neq0$. The \emph{relative Fisher divergence} is defined as
\begin{equation}\label{eq:relativeFD}
F_{p,\lambda}[f||h]:=\int_{\Omega}f^{1+p(\lambda-1)}h^{-\lambda p}\left|\frac d{dx}\left[\log\frac{f}{h}\right]\right|^p\,dx, \quad
\phi_{p,\lambda}[f||h]:=F_{p,\lambda}[f||h]^\frac{1}{p\lambda}
\end{equation}
\end{definition}

\medskip

The relative Fisher divergence is obtained as the $(p,\lambda)$-Fisher information applied to the relative differential-escort transformed density, as the following result shows.
\begin{lemma}\label{lem:relativeFD}
In the same conditions and notation as in Definition \ref{def:relativeFD}, we have
\begin{equation}\label{eq:relFI}
	\phi_{p,\lambda}[f_\alpha^{[h]}]=|\alpha|^{1/\lambda}\phi_{p,\lambda\alpha}[f||h]^{\alpha}.
	\end{equation}
\end{lemma}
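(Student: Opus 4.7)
The plan is to apply the $(p,\lambda)$-Fisher information directly to the transformed density $f_\alpha^{[h]}$, using both the derivative formula from Lemma \ref{lem:deriv} and the change of variable $dy = f^{1-\alpha}(x)\,h^{\alpha}(x)\,dx$ coming from Definition \ref{def:transf}, and then reorganizing all powers of $f$ and $h$ until the relative Fisher divergence $F_{p,\alpha\lambda}[f||h]$ emerges naturally.

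More precisely, I would first write
\begin{equation*}
\phi_{p,\lambda}[f_\alpha^{[h]}]^{p\lambda} \;=\; \int_{\widetilde{\Omega}_\alpha} \bigl(f_\alpha^{[h]}(y)\bigr)^{1+p(\lambda-2)} \left|\frac{df_\alpha^{[h]}}{dy}(y)\right|^{p} dy,
\end{equation*}
and then substitute $f_\alpha^{[h]}(y(x)) = (f(x)/h(x))^{\alpha}$ together with the expression
\begin{equation*}
\frac{df_\alpha^{[h]}}{dy}(y(x)) \;=\; \frac{\alpha\,f^{2\alpha-1}(x)}{h^{2\alpha}(x)}\left(\frac{f'(x)}{f(x)}-\frac{h'(x)}{h(x)}\right)
\end{equation*}
supplied by Lemma \ref{lem:deriv}, and finally the Jacobian $dy=f^{1-\alpha}(x)h^{\alpha}(x)\,dx$. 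This yields an integral over $\Omega$ in which $|\alpha|^{p}$ factors out and a single $|(\log(f/h))'|^{p}$ appears.

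The decisive (and only really non-trivial) step is the bookkeeping of the exponents of $f$ and $h$. Collecting the three contributions, the exponent of $f$ in the integrand becomes
\begin{equation*}
\alpha\bigl(1+p(\lambda-2)\bigr)+p(2\alpha-1)+(1-\alpha) \;=\; 1+p(\alpha\lambda-1),
\end{equation*}
while the exponent of $h$ collapses to $-\alpha p\lambda$. Inserting these values gives precisely
\begin{equation*}
\phi_{p,\lambda}[f_\alpha^{[h]}]^{p\lambda} \;=\; |\alpha|^{p}\int_{\Omega} f^{1+p(\alpha\lambda-1)}\,h^{-\alpha p\lambda}\,\left|\frac{d}{dx}\log\frac{f}{h}\right|^{p} dx \;=\; |\alpha|^{p}\,F_{p,\alpha\lambda}[f\|h].
\end{equation*}

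To conclude, I would take the $1/(p\lambda)$-th power on both sides and use the definition $\phi_{p,\alpha\lambda}[f\|h] = F_{p,\alpha\lambda}[f\|h]^{1/(p\alpha\lambda)}$, which gives $F_{p,\alpha\lambda}[f\|h]^{1/(p\lambda)} = \phi_{p,\alpha\lambda}[f\|h]^{\alpha}$ and hence the claimed identity \eqref{eq:relFI}. I would expect the main (though still modest) obstacle to be only the exponent arithmetic above, since an algebraic slip there would break the identification with $F_{p,\alpha\lambda}$; everything else is a direct substitution.
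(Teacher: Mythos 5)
Your proposal is correct and follows essentially the same route as the paper: apply the definition of $\phi_{p,\lambda}$ to $f_\alpha^{[h]}$, substitute the derivative formula from Lemma \ref{lem:deriv} and the Jacobian $dy=f^{1-\alpha}h^{\alpha}\,dx$, and collect exponents to identify $|\alpha|^{p}F_{p,\alpha\lambda}[f\|h]$. The exponent arithmetic ($1+p(\alpha\lambda-1)$ for $f$ and $-\alpha p\lambda$ for $h$) and the final rescaling by the $1/(p\lambda)$-th power all match the paper's computation.
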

\begin{proof}
Recalling the definitions \eqref{eq:def_FI} and \eqref{eq:transf} and employing Lemma \ref{lem:deriv}, we can write
\begin{equation*}
\begin{split}
\phi_{p,\lambda}^{p\lambda}[f_\alpha^{[h]}]&=\int_{\widetilde{\Omega}}\big(f_\alpha^{[h]}\big)^{1+p(\lambda-2)}(y)\big|\big(f_\alpha^{[h]}\big)'(y)\big|^p\,dy\\
&=\int_{\Omega}\left(\frac{f(x)}{h(x)}\right)^{p\alpha(\lambda-2)}\left|\frac{\alpha f^{2\alpha-1}(x)}{h^{2\alpha}(x)}\left(\frac{f'(x)}{f(x)}-\frac{h'(x)}{h(x)}\right)\right|^pf(x)\,dx\\
&=|\alpha|^p\int_{\Omega}f^{1+\alpha p(\lambda-2)+p(2\alpha-1)}(x)h^{\alpha p(2-\lambda)-2\alpha p}(x)\left|\frac{f'(x)}{f(x)}-\frac{h'(x)}{h(x)}\right|^p\,dx,\\
&=|\alpha|^p\phi_{p,\lambda\alpha}^{p\lambda\alpha}[f||h]
\end{split}
\end{equation*}
which leads to the right hand side of \eqref{eq:relFI}.
\end{proof}
Let us also remark the following symmetry relation
$$
F_{p,\lambda}[f||h]=F_{p,\widetilde\lambda}[h||f],\qquad \widetilde\lambda=1-\lambda-\frac 1p,
$$
which is easily derived from the definition.

\medskip

The second informational quantity that we introduce is called the \emph{relative cumulative moment} and extends the cumulative moments derived (see \cite{Puertas2025}) via the standard differential-escort transformations.
\begin{definition}\label{def:relativeCM}
Let $f$ and $h$ be two probability density functions satisfying \eqref{cond:support} and let $p>0$, $\alpha\in\Rset$. The \emph{relative cumulative moment} is defined as the expected value
$$
\mu_{p,\alpha}[f||h]:=\mu_p[f_\alpha^{[h]}]=\left\langle|y|^p\right\rangle=\int_{\widetilde{\Omega}}f_\alpha^{[h]}(y)|y|^p\,dy.
$$
A simple calculation employing Definition \ref{def:transf} gives the explicit expression
\begin{equation}\label{eq:relativeCM}
\mu_{p,\alpha}[f||h]=\int_{\Omega}\left|\int_{x_i}^xf(s)^{1-\alpha}h(s)^{\alpha}\,ds\right|^pf(x)\,dx.
\end{equation}
We also introduce the quantity
$$
\sigma_{p,\alpha}[f||h]:=\mu_{p,\alpha}[f||h]^\frac1{p\alpha}
$$
\end{definition}
Observe that the right hand side in \eqref{eq:relativeCM} strongly reminds of the cumulative moments defined in \cite{Puertas2025}, motivating the name. Note that, the application of such a cumulative moment $\mu_{p,\gamma}$ to a relative differential-escort transformation of $\alpha$-order gives again a relative cumulative moment $\mu_{p,\alpha\gamma}[f||h],$ in view of Eq.~\eqref{eq:comp}. It is worth mentioning that
$$
\mu_{p,\alpha}[f||f]=\langle|F(x)|^p\rangle,\qquad F(x):=\int_{x_i}^xf(t)dt,
$$
while $F_{p,\lambda}[f||f]=0$. 

\begin{remark}
In contrast to the relative Fisher information, the relative cumulative moment  $\mu_{p,\alpha}[f||h]$ is not minimized when $f=h.$ Simple counterexamples can be derived by taking $f=1$ and $h=(\eta+1)x^\eta$ in the support $\Omega=[0,1]$.
\end{remark}

A remarkable property of the latter quantities is the invariance under scaling transformations of both probability densities $f$ and $h$.
\begin{lemma}
Given $f,h:\Omega\subseteq\Rset\longrightarrow \Rset^+$ two probability densities and $r>0$ a real number, we have:
\begin{equation}
\phi_{p,\lambda}[f_{(r)}||g_{(r)}]=\phi_{p,\lambda}[f||g],\qquad \sigma_{p,\lambda}[f_{(r)}||g_{(r)}]=\sigma_{p,\lambda}[f||g]
\end{equation}
\end{lemma}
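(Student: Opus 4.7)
The plan is to reduce both invariance statements to a single, more fundamental observation: the relative differential-escort transformation is itself invariant under joint scaling of $f$ and $h$, in the sense that $\mathfrak R_\alpha^{[h_{(r)}]}[f_{(r)}]$ and $\mathfrak R_\alpha^{[h]}[f]$ coincide as probability densities (in the same $y$-variable). Once this is in hand, the two claimed invariances will follow instantly from Lemma \ref{lem:relativeFD} and from Definition \ref{def:relativeCM}.

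First I would verify the joint-scaling invariance of the transformation. Pointwise we have $f_{(r)}(x)/h_{(r)}(x) = f(rx)/h(rx)$, so the ratio is unaffected by the scaling. For the change of variable, starting from the canonical choice and using $y'(x)=f_{(r)}(x)^{1-\alpha}h_{(r)}(x)^{\alpha}=r\,f(rx)^{1-\alpha}h(rx)^{\alpha}$, the substitution $u=rx$ gives
\begin{equation*}
y(x)=\int_0^x r\,f(rs)^{1-\alpha}h(rs)^{\alpha}\,ds=\int_0^{rx}f(u)^{1-\alpha}h(u)^{\alpha}\,du=\bar y(rx),
\end{equation*}
where $\bar y$ denotes the change of variable associated with the pair $(f,h)$. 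Hence $\mathfrak R_\alpha^{[h_{(r)}]}[f_{(r)}](y)=(f(rx)/h(rx))^\alpha=(f(u)/h(u))^\alpha=\mathfrak R_\alpha^{[h]}[f](y)$, as functions of the same $y$.

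Second, for the scaling invariance of $\phi_{p,\lambda}[f||h]$, I would specialize Lemma \ref{lem:relativeFD} to $\alpha=1$, which yields $\phi_{p,\lambda}[f_1^{[h]}]=\phi_{p,\lambda}[f||h]$ (and analogously for the scaled pair). Since the two relative differential-escort densities agree by the first step, the $(p,\lambda)$-Fisher informations of the transformed densities agree as well, giving $\phi_{p,\lambda}[f_{(r)}||h_{(r)}]=\phi_{p,\lambda}[f||h]$. For $\sigma_{p,\lambda}$, I would use Definition \ref{def:relativeCM} directly: $\mu_{p,\lambda}[f||h]=\mu_p[f_\lambda^{[h]}]$ depends only on the relative differential-escort density, so the joint-scaling invariance of the transformation transfers verbatim to $\mu_{p,\lambda}$ and hence to $\sigma_{p,\lambda}=\mu_{p,\lambda}^{1/(p\lambda)}$.

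There is no real obstacle here, since the arguments are essentially bookkeeping of the $r$-factors arising in the change of variable; the only point that deserves a moment of care is to check that the $r$-factor in $y'(x)$ (coming from the sum of the exponents $(1-\alpha)+\alpha=1$ applied to the prefactor $r$ in $f_{(r)}$ and $h_{(r)}$) precisely cancels the Jacobian of $u=rx$, so that $y(x)=\bar y(rx)$ with no residual scaling. If one preferred to avoid invoking the lemma, the same conclusion can be reached by a direct calculation on \eqref{eq:relativeFD} and \eqref{eq:relativeCM}, in which the overall $r$-factors arising from $f_{(r)}^{1+p(\lambda-1)}$, $h_{(r)}^{-\lambda p}$ and $|(\log f_{(r)}/h_{(r)})'|^p$ combine with the Jacobian to leave the integrands invariant.
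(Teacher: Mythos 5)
Your proof is correct, but it follows a genuinely different route from the paper's. The paper proves the lemma by two separate direct computations: it substitutes $f_{(r)},h_{(r)}$ into the explicit integral formulas \eqref{eq:relativeCM} and \eqref{eq:relativeFD} and checks, via the change of variable $t=rs$, $\overline{x}=rx$, that all powers of $r$ cancel. You instead isolate the stronger and more conceptual fact that the relative differential-escort transformation itself is invariant under joint scaling, $\mathfrak R_\alpha^{[h_{(r)}]}[f_{(r)}]=\mathfrak R_\alpha^{[h]}[f]$ as densities in the same $y$-variable (the ratio $f/h$ is scaling-invariant and the $r$ from the Jacobian of $u=rx$ is exactly absorbed by the exponent sum $(1-\alpha)+\alpha=1$ in $y'$), and then read off both invariances from the representations $\mu_{p,\alpha}[f||h]=\mu_p[f_\alpha^{[h]}]$ and, via Lemma~\ref{lem:relativeFD} at $\alpha=1$, $\phi_{p,\lambda}[f||h]=\phi_{p,\lambda}[f_1^{[h]}]$. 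Your approach buys a unified one-line deduction of both statements plus the extra information that the transformed density (hence \emph{every} functional of it, e.g.\ its entropies) is scaling-invariant; the paper's approach is more self-contained, requiring neither Lemma~\ref{lem:relativeFD} nor its differentiability hypotheses for the moment part. Two cosmetic points: for the Fisher half you implicitly need $f,h$ derivable so that Lemma~\ref{lem:relativeFD} applies (this is already assumed in Definition~\ref{def:relativeFD}), and your lower limit of integration in $y(x)$ should be the canonical $x_i/r$ for the scaled pair (mapping to $x_i$ under $u=rs$) rather than $0$, so that the two transformed densities share the same origin and the absolute moments genuinely coincide; neither point affects the validity of the argument.
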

\begin{proof}
We start by observing that the support $\Omega_{(r)}$ of the scaled density $f_{(r)}=\mathcal S_{r}[f]$ is given by $\Omega_{(r)}=(x_i/r,x_f/r)$. Then, we find by direct computation that
\begin{eqnarray*}
\mu_{p,\gamma}[f_{(r)}||g_{(r)}]&=&\int_{\Omega_{(r)}}\left|\int_{x_i/r}^xf_{(r)}(s)^{1-\alpha}h_{(r)}(s)^{\alpha}\,ds\right|^pf_{(r)}(x)\,dx
\\
&=&\int_{\Omega_{(r)}}\left|\int_{x_i/r}^{x}rf(rs)^{1-\alpha}h(rs)^{\alpha}\,ds\right|^prf(rx)\,dx
\\
&=&\int_{\Omega_{(r)}}\left|\int_{x_i}^{rx}f(t)^{1-\alpha}h(t)^{\alpha}\,dt\right|^prf(rx)\,dx
\\
&=&\int_{\Omega}\left|\int_{x_i}^{\overline x}f(t)^{1-\alpha}h(t)^{\alpha}\,dt\right|^pf(\overline x)\,d\overline x
=\mu_{p,\gamma}[f||g].
\end{eqnarray*}
The scaling invariance of the relative Fisher divergence follows in a similar way.
\begin{eqnarray*}
	F_{p,\lambda}[f_{(r)}||g_{(r)}]&=&\int_{\Omega_{(r)}} f_{(r)}(x)^{1-p}\left(\frac{f_{(r)}(x)}{h_{(r)}(x)}\right)^{p\lambda}\left|\frac d{dx}\log\frac{f_{(r)}(x)}{h_{(r)}(x)}\right|^p\,dx
	\\
	&=&\int_{\Omega_{(r)}} r^{1-p}f(rx)^{1-p}\left(\frac{rf(rx)}{rh(rx)}\right)^{p\lambda}\left|\left(\log\frac{rf(rx)}{rh(rx)}\right)'\right|^p\,dx
	\\
	&=&\int_{\Omega} r^{-p}f(x)\left(\frac{f(x)}{h(x)}\right)^{p\lambda}\left|r\left(\log\frac{f(x)}{h(x)}\right)'\right|^p\,dx=F_{p,\lambda}[f||g],
\end{eqnarray*}
completing the proof.
\end{proof}
Let us stress at this point that several Fisher divergence measures have been introduced in the past. We quote first the relative Fisher information considered in \cite{Antolin2014,Yamano2021}	
\begin{equation}\label{eq:relFish}
		F_{rel}[f||g]:=\int_\Rset f(x)\left|\frac d{dx}\log\left(\frac {f(x)}{g(x)}\right)\right|^2\,dx
\end{equation}		
Other relative measures similar to the previous one have been studied and applied in the literature~\cite{Antolin2009,Martin2013,Zozor2015,Toranzo2017}. Probably the most similar one to the relative Fisher divergence that we introduce in this work, as far as we know, was given in 1978 by Hammad \cite{Hammad1978}
\begin{equation}\label{eq:relFish_Hammad}
F_{\alpha}^{H}[f||g]:=\int_\Rset f^\alpha (x)g^{1-\alpha}(x)\left|\frac d{dx}\log\left(\frac {f(x)}{g(x)}\right)\right|^2\,dx.
\end{equation}
However, these two relative Fisher measures have the rather serious drawback of not being invariant to scaling transformations. In fact, straightforward computations lead to
$$
F_{rel}[f_{(r)}||g_{(r)}]=r^2F_{rel}[f||g],\qquad F_{rel}^{H}[f_{(r)}||g_{(r)}]=r^2F^H_{rel}[f||g].
$$
This fact prevented establishing inequalities involving the two Fisher measures given in \eqref{eq:relFish} and \eqref{eq:relFish_Hammad} with the Rényi and Kullback-Leibler divergences, and indeed, to the best of our knowledge such inequalities are missing from literature. We thus consider that our definition of the relative Fisher divergence is a significant improvement with respect to these previous ones, since we are able to establish sharp inequalities connecting the relative Fisher divergence to the Rényi and Kullback-Leibler divergences, as we show in the next section.

\subsection{Informational inequalities}

We are now ready to state and prove the two informational inequalities obtained by the application of the relative differential-escort transformation.
\begin{theorem}\label{th:ineq}
Let $f$ and $h$ be two probability density functions satisfying \eqref{cond:support}, $p\in\Rset$, $\lambda\in\Rset\setminus\{1\}.$ Then, the following two inequalities hold true.

$\bullet$ \textbf{moment-entropy-like inequality} involving the relative cumulative moments and the R\'enyi divergences:  if $p^*\geqslant0,\,\alpha>0$ and $\lambda>\frac1{1+p^*}$ are satisfied, then
\begin{equation}\label{ineq:E-M-like}
e^{\, D_{\xi(\lambda,\alpha)}[f||h]}\sigma_{p^*,\alpha}[f||h]\geq (K^{(0)}_{p,\lambda})^{\frac1\alpha}.
\end{equation}

$\bullet$ \textbf{Stam-like inequality} involving the R\'enyi divergences and the relative Fisher divergence: if $p\geqslant 1,\,\alpha>0$ and the condition~\eqref{eq:sign_cond1} are satisfied, then
\begin{equation}\label{ineq:Stam-like}
	\left[e^{-D_{\xi(\lambda,\alpha)}[f||h]}\phi_{p,\beta\alpha}[f||h]\right]^{1+\beta-\lambda}
	\geq\alpha^{\frac{\lambda-\beta-1}{\alpha\beta}}\left(K^{(1)}_{p,\beta,\lambda}\right)^\frac1\alpha,
	\end{equation}
provided that $\mathfrak R^{[h]}_\alpha[f]$ is absolutely continuous and  $\alpha\neq0$.

The previous inequalities are sharp whenever $N_\frac{1}{\alpha^*}[g_{p,\lambda}]<\infty$ for \eqref{ineq:E-M-like}, respectively $N_\frac{1}{\alpha^*}[g_{p,\beta,\lambda}]<\infty$ for \eqref{ineq:Stam-like}. More precisely, the minimizers to the moment-entropy-like inequality~\eqref{ineq:E-M-like} are given by
$$
f_{\rm min}(x):=\mathfrak{R}^{-1,[h]}_{\alpha}[g_{p,\lambda}](x), \quad x\in\Omega,
$$
while the minimizers to the Stam-like inequality~\eqref{ineq:Stam-like} are given by
$$
f_{\rm min}(x):=\mathfrak{R}^{-1,[h]}_{\alpha}[g_{p,\beta,\lambda}](x), \quad x\in\Omega,
$$
according to Definition \ref{def:inverse}.
\end{theorem}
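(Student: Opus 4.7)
The plan is to transfer the two classical inequalities~\eqref{ineq:bip_E-M} and~\eqref{ineq:trip_Stam_extended}, applied to the transformed density $f_\alpha^{[h]}=\mathfrak R_\alpha^{[h]}[f]$, into inequalities bearing on $f$ and $h$ via the translation dictionary that we have already built in Section~\ref{sec:transf} and at the beginning of Section~\ref{sec:infineq}. Concretely, Lemma~\ref{lem:Renyi} together with the identity $K_\xi[f||h]=e^{(\xi-1)D_\xi[f||h]}$ and $\xi(\lambda,\alpha)-1=\alpha(\lambda-1)$ yields the crucial relation
\begin{equation*}
N_\lambda\!\left[f_\alpha^{[h]}\right]=e^{-\alpha D_{\xi(\lambda,\alpha)}[f||h]},
\end{equation*}
while the very Definition~\ref{def:relativeCM} gives $\sigma_{p^*}[f_\alpha^{[h]}]=\sigma_{p^*,\alpha}[f||h]^{\alpha}$, and Lemma~\ref{lem:relativeFD} gives $\phi_{p,\beta}[f_\alpha^{[h]}]=|\alpha|^{1/\beta}\phi_{p,\beta\alpha}[f||h]^{\alpha}$.

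For the moment-entropy-like inequality, the range $p^*\geqslant 0$, $\lambda>1/(1+p^*)$ matches precisely the hypotheses of~\eqref{ineq:bip_E-M}, hence applying that inequality to $f_\alpha^{[h]}$ gives $\sigma_{p^*}[f_\alpha^{[h]}]\geqslant K^{(0)}_{p,\lambda}\,N_\lambda[f_\alpha^{[h]}]$, and inserting the two dictionary identities then extracting the $\alpha$-th root (here the assumption $\alpha>0$ preserves the inequality and is used in a single place) produces~\eqref{ineq:E-M-like}. For the Stam-like inequality, the sign condition~\eqref{eq:sign_cond1} is exactly what is required to apply~\eqref{ineq:trip_Stam_extended} to $f_\alpha^{[h]}$, and substituting the three dictionary identities with $\theta(\beta,\lambda)=1+\beta-\lambda$ and collecting the $|\alpha|^{(1+\beta-\lambda)/\beta}$ prefactor, then taking the $\alpha$-th root, yields~\eqref{ineq:Stam-like} after a short algebraic simplification.

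For the minimizers, I would argue by sharpness and inversion. The classical inequality~\eqref{ineq:bip_E-M} is saturated exactly by the stretched Gaussian $g_{p,\lambda}$ (up to scaling, but the ratio $\sigma_{p^*}/N_\lambda$ is scale-invariant). Therefore equality in~\eqref{ineq:E-M-like} is achieved if and only if $f_\alpha^{[h]}$ coincides, up to a scaling change, with $g_{p,\lambda}$; the canonical scaling which makes $\mathfrak R_\alpha^{[h]}$ invertible is precisely the one built into Definition~\ref{def:inverse}, namely $r(\alpha,g_{p,\lambda})=N_{1/\alpha^*}[g_{p,\lambda}]$, which is why we need the hypothesis $N_{1/\alpha^*}[g_{p,\lambda}]<\infty$. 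Applying $\mathfrak R_{\alpha}^{-1,[h]}$ to $g_{p,\lambda}$ then gives the asserted minimizer, and the same argument with $g_{p,\beta,\lambda}$ handles~\eqref{ineq:Stam-like}.

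The routine but delicate part will be the bookkeeping of exponents in the Stam-like case (dealing simultaneously with the $\theta(\beta,\lambda)$ power, the $|\alpha|^{1/\beta}$ factor from Lemma~\ref{lem:relativeFD}, and the $\alpha$-th root extraction), together with checking that the absolute value $|\alpha|$ can be replaced by $\alpha$ under the standing assumption $\alpha>0$. The main conceptual obstacle, however, is the justification of equality cases: one must check that the \emph{well-definedness} hypothesis $N_{1/\alpha^*}[g_{p,\lambda}]<\infty$ (respectively for $g_{p,\beta,\lambda}$) is exactly what is needed in order to recover, via Definition~\ref{def:inverse} and the composition identity~\eqref{eq:comp_inverse1}, a bona fide probability density $f_{\min}$ on $\Omega$ satisfying $\mathfrak R_\alpha^{[h]}[f_{\min}]=\mathcal S_{r}[g_{p,\lambda}]$ with $r=N_{1/\alpha^*}[g_{p,\lambda}]$, so that sharpness of the classical inequality transfers cleanly to sharpness of the new one.
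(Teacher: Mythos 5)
Your proposal is correct and follows essentially the same route as the paper: apply the classical moment--entropy inequality \eqref{ineq:bip_E-M} and the triparametric Stam inequality \eqref{ineq:trip_Stam_extended} to the transformed density $f_\alpha^{[h]}$, then translate back via Lemma \ref{lem:Renyi}, Definition \ref{def:relativeCM} and Lemma \ref{lem:relativeFD}, taking the $\alpha$-th root (using $\alpha>0$), and obtain the minimizers as $\mathfrak{R}^{-1,[h]}_{\alpha}$ applied to the classical minimizers. Your additional remarks on why $N_{1/\alpha^*}[g_{p,\lambda}]<\infty$ is exactly the well-definedness condition for the inverse transformation are consistent with (indeed slightly more explicit than) the paper's treatment of sharpness.
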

\begin{remark}
Observe that for $\beta=\lambda$ the minimizers to the inequality~\eqref{ineq:Stam-like} reduce to the same minimizers as to the inequality~\eqref{ineq:E-M-like} and these minimizers can be expressed in terms of the generalized trigonometric and hyperbolic functions and other special functions, as we shall show in the next section. In addition, still when $\beta=\lambda$, by multiplying both inequalities one obtains a Cramér-Rao-like inequality in the relative framework.
\end{remark}
\begin{remark}
Note that the condition $\mathfrak R^{[h]}_\alpha[f]$ absolutely continuous is achieved for example if $\frac fh$ is a $C^1$ function with compact support, or if $\frac fh$ is a $C^1$ function with unbounded support but whose derivative $(f/h)'$ is bounded. Also observe that the case $f=h$ is excluded from the previous condition since in this case $\mathfrak R^{[h]}_\alpha[f]$ is the uniform probability density on $[0,1],$ which is not absolutely continuous in $\Rset$.
\end{remark}
\begin{proof}
We first prove the entropy-moment-like inequality. To this end, we start from the classical entropy-moment inequality Eq. \eqref{ineq:bip_E-M} to get
\begin{equation}\label{eq:interm2}
\frac{\mu_{p^*}^{\frac{1}{p^*}}[f_\alpha^{[h]}]}{N_{\lambda}[f_\alpha^{[h]}]}\geq K^{(0)}_{p,\lambda}.
\end{equation}
We then infer from Definition \ref{def:relativeCM}, Lemma \ref{lem:Renyi} and \eqref{eq:interm2} that
$$
\frac{\mu_{p^*,\alpha}[f||h]^{\frac{1}{p^*}}}{K_{\xi(\lambda,\alpha)}[f||h]^{\frac{1}{1-\lambda}}}\geq K^{(0)}_{p,\lambda}
$$
and \eqref{ineq:E-M-like} follows by easy manipulations. Since the minimizers to the classical inequalities \eqref{ineq:bip_E-M} are the stretched Gaussians $g_{p,\lambda}$, we infer that the minimizers to the new inequality \eqref{ineq:E-M-like} are given by the inverse relative differential-escort transformation applied to $g_{p,\lambda}$.

\medskip

In order to prove the Stam-like inequality \eqref{ineq:Stam-like}, we start from the triparametric Stam inequality \eqref{ineq:trip_Stam_extended}. Since $\mathfrak R^{[h]}_\alpha [f]$ is absolutely continuous, we can thus apply the Stam inequality \eqref{ineq:trip_Stam_extended} to the transformed density $f_\alpha^{[h]}$ to get
$$
\left[N_{\lambda}[f_\alpha^{[h]}]\phi_{p,\beta}[f_\alpha^{[h]}]\right]^{1+\beta-\lambda}\geq K^{(1)}_{p,\beta,\lambda},
$$
which, in view of Lemmas \ref{lem:Renyi} and \ref{lem:relativeFD}, leads to
$$
|\alpha|^{\frac{1+\beta-\lambda}{\beta}}\left[K_{\xi(\lambda,\alpha)}^{\frac{1}{1-\lambda}}[f||h]\phi_{p,\beta,\alpha}^{\frac{1}{p\beta}}[f||h]\right]^{1+\beta-\lambda}
\geq K^{(1)}_{p,\beta,\lambda},
$$
which is equivalent to \eqref{ineq:Stam-like}. Since the minimizers to the triparametric Stam inequality \eqref{ineq:trip_Stam_extended} are the densities $g_{p,\beta,\lambda}$, we infer that the minimizers to the new inequality~\eqref{ineq:Stam-like} are given by the inverse relative differential-escort transformation applied to $g_{p,\beta,\lambda}$, as claimed.
\end{proof}
Let us remark that the inequality~\eqref{ineq:E-M-like} requires less restrictive conditions than the inequality~\eqref{ineq:Stam-like}. Indeed, in the former no assumptions of absolute continuity are needed and thus the class of densities $f,\,h$ for which it is fulfilled is more general. In particular, it also applies for $f=h$ recalling that $\mu_{p,\alpha}[f||f]=\langle F(x)^p\rangle\neq0$.

\bigskip

\noindent \textbf{Inequalities involving the Shannon entropy}. Some surprisingly simple and nice inequalities (and, up to the best of our knowledge, new) are obtained from the previous ones in the limiting case $\lambda=1$ and involve classical quantities such as the Shannon entropy and the Kullback-Leibler divergence. Although they are consequences of the inequalities established in Theorem \ref{th:ineq}, we believe that they are of sufficient interest by themselves in order to state them as a separate theorem.
\begin{theorem}\label{th:Shannon}
In the same conditions as in the statement of Theorem \ref{th:ineq}, the following two inequalities hold true:
\begin{equation}\label{ineq:E-M-like-Shannon}
\sigma_{p^*,\alpha}[f||h]e^{D[f||h]}\geq (K^{(0)}_{p,1})^{1/\alpha},
\end{equation}
and
\begin{equation}\label{ineq:Stam-like-Shannon}
e^{-D[f||h]}\phi_{p,\alpha}[f||h]\geq\left(\frac{K^{(1)}_{p,1}}{\alpha}\right)^{1/\alpha}.
\end{equation}
\end{theorem}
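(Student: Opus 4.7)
The plan is to deduce both inequalities by the same strategy employed in Theorem~\ref{th:ineq}, specialized to $\lambda=1$ (and additionally $\beta=1$ for the Stam-like case), but with the R\'enyi bridge of Lemma~\ref{lem:Renyi} replaced by its Shannon counterpart $S[f_\alpha^{[h]}]=-\alpha D[f||h]$, so that the R\'enyi divergence $D_{\xi(\lambda,\alpha)}[f||h]$ becomes the Kullback-Leibler divergence $D[f||h]$ throughout. Concretely, I would apply the classical $\lambda=1$ versions of the moment-entropy and triparametric Stam inequalities to the transformed density $f_\alpha^{[h]}$, and then rewrite the resulting bounds in terms of $f$ and $h$ using the relative-framework identities proved in Section~\ref{sec:infineq}.

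For~\eqref{ineq:E-M-like-Shannon}, I would first apply the classical moment-entropy inequality~\eqref{ineq:bip_E-M} at $\lambda=1$ to $f_\alpha^{[h]}$, yielding $\mu_{p^*}^{1/p^*}[f_\alpha^{[h]}]/N[f_\alpha^{[h]}]\geq K^{(0)}_{p,1}$. By Definition~\ref{def:relativeCM} the numerator is exactly $\sigma_{p^*,\alpha}^{\alpha}[f||h]$, while Lemma~\ref{lem:Renyi} identifies the denominator as $e^{-\alpha D[f||h]}$. Extracting the $\alpha$-th root (using $\alpha>0$) then produces~\eqref{ineq:E-M-like-Shannon}.

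For~\eqref{ineq:Stam-like-Shannon}, I would apply the triparametric Stam inequality~\eqref{ineq:trip_Stam_extended} with $\beta=\lambda=1$---for which $\theta(\beta,\lambda)=1$ and the sign condition~\eqref{eq:sign_cond1} collapses to $\sign(p^*)=\sign(1)$, satisfied whenever $p>1$---to the density $f_\alpha^{[h]}$. This yields $\phi_{p,1}[f_\alpha^{[h]}]\cdot N[f_\alpha^{[h]}]\geq K^{(1)}_{p,1,1}$. Lemma~\ref{lem:relativeFD} at $\lambda=1$ then converts the Fisher factor into $|\alpha|\,\phi_{p,\alpha}^{\alpha}[f||h]$, and Lemma~\ref{lem:Renyi} converts the entropy factor into $e^{-\alpha D[f||h]}$. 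The $\alpha$-th root of the resulting inequality is exactly~\eqref{ineq:Stam-like-Shannon}, under the natural convention $K^{(1)}_{p,1}\equiv K^{(1)}_{p,1,1}$.

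The only mildly delicate point---and the main, admittedly minor, obstacle---is that Theorem~\ref{th:ineq} is explicitly stated for $\lambda\neq 1$, so one cannot merely substitute $\lambda=1$ in~\eqref{ineq:E-M-like} and~\eqref{ineq:Stam-like}. This is circumvented by rerunning the argument of Theorem~\ref{th:ineq} directly with the $\lambda=1$ instances of the classical inequalities (available as the Shannon limits of the Lutwak-Bercher family); equivalently, one could pass to the limit $\lambda\to 1$ in~\eqref{ineq:E-M-like} and~\eqref{ineq:Stam-like} after checking the continuity of the constants $K^{(0)}_{p,\lambda}$ and $K^{(1)}_{p,\beta,\lambda}$ at $\lambda=1$ together with the continuity of $D_{\xi(\lambda,\alpha)}[f||h]$ at $\xi=1$, which is standard.
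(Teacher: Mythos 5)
Your proposal is correct and follows essentially the same route as the paper, which likewise obtains \eqref{ineq:E-M-like-Shannon} and \eqref{ineq:Stam-like-Shannon} by letting $\lambda\to1$ in \eqref{ineq:E-M-like} and \eqref{ineq:Stam-like}, or equivalently by rerunning the argument of Theorem \ref{th:ineq} with the Shannon identity \eqref{eq:Shannon} in place of \eqref{eq:Renyi}. Your explicit attention to the continuity of the constants and of $D_{\xi(\lambda,\alpha)}$ at $\lambda=1$ is a worthwhile refinement that the paper leaves implicit.
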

\begin{proof}
The inequalities \eqref{ineq:E-M-like-Shannon} and \eqref{ineq:Stam-like-Shannon} follow in a straightforward way by letting $\lambda\to1$ in the inequalities \eqref{ineq:E-M-like} and \eqref{ineq:Stam-like}, or equivalently employing the identity~\eqref{eq:Shannon}.
\end{proof}

\subsection{The minimizing densities}\label{sec:minimizers}
	
	In this section we compute the common minimizer of the two inequalities \eqref{ineq:E-M-like} and \eqref{ineq:Stam-like} when $\beta=\lambda$, assuming that 
    $$
    N_{\frac{1}{\alpha^*}}[g_{p,\lambda}]=N_{\frac{\alpha-1}{\alpha}}[g_{p,\lambda}]<\infty.
    $$ 
    Indeed, from the definition of the inverse of the relative differential-escort transformation, we deduce that
	\begin{equation}\label{eq:minim-G}
	f_{\rm min}(x)=\mathfrak{R}^{-1,[h]}_{\alpha}(g_{p,\lambda})(x)=((g_{p,\lambda})_{(r)}(y(x)))^{\frac{1}{\alpha}}h(x),
	\end{equation}
	with the change of variable
	$$
	y'(x)=((g_{p,\lambda})_{(r)}(y(x)))^{-\frac{1}{\alpha^*}}h(x),
	$$
	where $r=N_{\frac{1}{\alpha^*}}[g_{p,\lambda}]<\infty$ and $\frac{1}{\alpha^*} = \frac{\alpha-1}{\alpha}$.
	
Fix $\alpha\neq1$ and assume first that $\lambda>1$. Separating variables and integrating the differential equation implicitly defining $y'(x)$, we find
	\begin{equation*}
	\begin{split}
{\rm H}(x):&=\int_0^x h(s)\,ds=\int_0^y(rg_{p,\lambda}(rt))^{\frac{1}{\alpha^*}}\,dt\\
	&=(ra_{p,\lambda})^{\frac{1}{\alpha^*}}\int_0^{y}\bigg(1+(1-\lambda)|rt|^{p^*}\bigg)^{\frac{1}{\alpha^*(\lambda-1)}}\,dt\\ &=\frac{(ra_{p,\lambda})^{\frac{1}{\alpha^*}}}{|1-\lambda|^{\frac{1}{p^*}}r} \int_0^{|1-\lambda|^{\frac{1}{p^*}}ry}\bigg(1-\widetilde{t}^{p^*}\bigg)^{\frac{1}{\alpha^*(\lambda-1)}}\,d\widetilde{t}\\
	&=\frac{(ra_{p,\lambda})^{\frac{1}{\alpha^*}}}{|1-\lambda|^{\frac{1}{p^*}}r} \arcsin_{\alpha^*(1-\lambda),p^*}\left(|1-\lambda|^{\frac{1}{p^*}}ry\right).
	\end{split}
	\end{equation*}
	The latter expression allows us to calculate $y(x)$; more precisely, by inverting the generalized arcsine function, we obtain
	\begin{equation}\label{eq:interm4}
	|1-\lambda|^{\frac{1}{p^*}}ry(x)=\sin_{\alpha^*(1-\lambda),p^*}(C(p,\lambda,\alpha){\rm H}(x)),
	\quad C(p,\lambda,\alpha):=|1-\lambda|^{\frac{1}{p^*}}r(ra_{p,\lambda})^{-\frac{1}{\alpha^*}}.
	\end{equation}
	We next insert \eqref{eq:interm4} into \eqref{eq:minim-G}. The next simplifications are rather surprising and give us a closed expression of the minimizer $f_{\rm min}$ in terms of generalized trigonometric functions. Indeed, taking into account that $\lambda>1$, we infer from \eqref{eq:interm4} that
	\begin{equation*}
	\begin{split}
	((g_{p,\lambda})_{(r)}(y(x)))^{\frac{1}{\alpha}}&=(ra_{p,\lambda})^{\frac{1}{\alpha}}\left(1+(1-\lambda)|ry(x)|^{p^*}\right)^{\frac{1}{\alpha(\lambda-1)}}\\
	&=(ra_{p,\lambda})^{\frac{1}{\alpha}}\left(1-\sin_{\alpha^*(1-\lambda),p^*}^{p^*}(C(p,\lambda,\alpha){\rm H}(x))\right)^{\frac{1}{\alpha(\lambda-1)}}\\
	&=(ra_{p,\lambda})^{\frac{1}{\alpha}}\left(\cos_{\alpha^*(1-\lambda),p^*}^{\alpha^*(1-\lambda)}(C(p,\lambda,\alpha){\rm H}(x))\right)^{\frac{1}{\alpha(\lambda-1)}}\\
	&=(ra_{p,\lambda})^{\frac{1}{\alpha}}\left(\cos_{\alpha^*(1-\lambda),p^*}(C(p,\lambda,\alpha){\rm H}(x))\right)^{\frac{1}{1-\alpha}},
	\end{split}
	\end{equation*}
	where we have employed the generalized trigonometric identity \eqref{eq:pyth}. In the opposite case $\lambda<1$, one can go identically along the previous calculations, only replacing the generalized trigonometric functions with their hyperbolic counterparts in view of \eqref{arcsinh_def} and \eqref{eq:pythh}. We can thus express the minimizer of the inequalities \eqref{ineq:E-M-like-G} and \eqref{ineq:Stam-like-G} as follows:
	\begin{equation}\label{eq:min_gen}
	f_{\rm min}(x)=(ra_{p,\lambda})^{\frac{1}{\alpha}}\left\{\begin{array}{ll}
	\left(\cos_{\alpha^*(1-\lambda),p^*}^{\frac{1}{1-\alpha}}(C(p,\lambda,\alpha){\rm H}(x))\right)h(x), & {\rm if} \ \lambda>1, \\[2mm]
	\left(\cosh_{\alpha^*(1-\lambda),p^*}^{\frac{1}{1-\alpha}}(C(p,\lambda,\alpha){\rm H}(x))\right)h(x), & {\rm if} \ \lambda<1,
	\end{array}\right.
	\end{equation}
	where
	$$ C(p,\lambda,\alpha)=|1-\lambda|^{\frac{1}{p^*}}r\left(N_{\frac{1}{\alpha^*}}[g_{p,\lambda}]a_{p,\lambda}\right)^{-\frac{1}{\alpha^*}} = |1-\lambda|^{\frac{1}{p^*}}\left(N_{\frac{1}{\alpha^*}}[g_{p,\lambda}]\right)^{\frac{1}{\alpha}} \left(a_{p,\lambda}\right)^{-\frac{1}{\alpha^*}}.
	$$
In the case $\lambda=1$, the minimizers are given by
\[
g_{p,1} (x) = a_{p,1}e^{-|x|^{p^*}}.
\]
Following the same steps as for the cases $\lambda>1$ and $\lambda <1$, we first separate variables and integrate the differential equation associated to the change of variable,
	\begin{equation*}
	\begin{split}
		{\rm H}(x):&=\int_0^x h(s)\,ds=\int_0^y(rg_{p,\lambda}(rt))^{\frac{1}{\alpha^*}}\,dt\\
		&= (ra_{p,1})^{\frac{1}{\alpha^*}} \int_0^y  e^{-\frac{|rt|^{p^*}}{\alpha^*}} \,dt\\
		&= (ra_{p,1})^{\frac{1}{\alpha^*}} \frac{(\alpha^*)^{\frac{1}{p^*} } }{r\, p^* }\int_0^{\frac{r^{p^*} }{\alpha^* }|y|^{p^*} }  \widetilde{t}^{\frac{1 }{p^* }-1 } e^{-\widetilde{t}} \,d\widetilde{t}\\
		&= (ra_{p,1})^{\frac{1}{\alpha^*}} \frac{(\alpha^*)^{\frac{1}{p^*} } }{r\, p^* }\gamma\left(\frac{1}{p^*}, \frac{r^{p^*} }{\alpha^* }|y|^{p^*}  \right),
	\end{split}
\end{equation*}

\noindent where $\gamma(s,x)$ denotes the incomplete Gamma function $\gamma(s,x)=\int_0^x t^{s-1}e^{-t}\,dt$. Then, inverting the incomplete Gamma function, we have
\begin{equation}\label{eq:interm7}
r^{p^*} |y|^{p^*} =\alpha^*\gamma^{-1}\left(\frac{1}{p^*}, C(p,1,\alpha){\rm H}(x) \right),
	\quad C(p,1,\alpha):= (ra_{p,1})^{-\frac{1}{\alpha^*}} \frac{r\, p^*}{(\alpha^*)^{\frac{1}{p^*} } }.
\end{equation}
Inserting~\eqref{eq:interm7} into~\eqref{eq:minim-G} gives
	\begin{equation*}
	\begin{split}
		((g_{p,1})_{(r)}(y(x)))^{\frac{1}{\alpha}}&=(ra_{p,1})^{\frac{1}{\alpha}} \left(\exp\left[-|ry(x)|^{p^*}\right]\right)^{\frac{1}{\alpha}} \\
		&=(ra_{p,1})^{\frac{1}{\alpha}} \left(\exp\left[-\alpha^* \gamma^{-1}\left(\frac{1}{p^*}, C(p,1,\alpha){\rm H}(x) \right)\right]\right)^{\frac{1}{\alpha}} \\
		&=(ra_{p,1})^{\frac{1}{\alpha}}\exp\left[-\frac{\alpha^*}{\alpha}\gamma^{-1}\left(\frac{1}{p^*}, C(p,1,\alpha){\rm H}(x)\right)\right] \\
	\end{split}
\end{equation*}
Thus the minimizer of the inequalities \eqref{ineq:E-M-like-G} and \eqref{ineq:Stam-like-G} for $\lambda =1$ is given by
\begin{equation}\label{eq:min_gen1}
	f_{\rm min}(x)=(ra_{p,1})^{\frac{1}{\alpha}}\exp\left[-\frac{\alpha^*}{\alpha}\gamma^{-1}\left(\frac{1}{p^*}, C(p,1,\alpha){\rm H}(x) \right)\right] h(x)
\end{equation}
where
$$
C(p,1,\alpha)=  \left(N_{\frac{1}{\alpha^*}}[g_{p,\lambda}]\right)^{\frac{1}{\alpha}} \left(a_{p,1}\right)^{-\frac{1}{\alpha^*}} \frac{ p^*}{(\alpha^*)^{\frac{1}{p^*} } }.
$$
Finally, when $\alpha=1,$ the change of variable reduces to $y(x)={\rm H}(x),$ and then, in the case when $g_{p,\lambda}$ is compactly supported (which corresponds to $\lambda>1$), the minimizing density writes
\begin{equation}\label{eq:min_gen_a1}
    f_{min}(x)=rg_{p,\lambda}(r {\rm H}(x))h(x),
\end{equation}
where $r$ is the length of the support of $g_{p,\lambda}$.

\subsection{An example with the Gaussian distribution}\label{sec:Gauss}

In this section we give an example of the previous theory, by picking the Gaussian density function (for simplicity, with expected value zero and variance $1/2$), namely
$$
G(x)=\frac{1}{\sqrt{\pi}}e^{-x^2}, \quad x\in\Rset
$$
as reference function. Letting thus $h=G$, we find by direct calculation from \eqref{eq:relativeFD} that
$$
F_{p,\lambda}[f||G]=\pi^{\frac{p\lambda}{2}}\int_{\Rset}f^{1+p(\lambda-1)}(x)e^{p\lambda x^2}\left|\frac{f'}{f}+2x\right|^p\,dx
$$
while \eqref{eq:relativeCM} gives
$$
\mu_{p,\alpha}[f||G]=\pi^{-\frac{\alpha p}{2}}\int_{\Rset}\left|\int_0^xf^{1-\alpha}(t)e^{-\alpha t^2}\,dt\right|^pf(x)\,dx.
$$
Finally, the R\'enyi divergence (for $\xi\neq1$) is given by
$$
D_{\xi}[f||g]=\frac{1}{\xi-1}\log\left(\int_{\Rset}f^{\xi}(x)\frac{e^{-(1-\xi)x^2}}{\pi^{(1-\xi)/2}}\,dx\right)
=\frac{1}{\xi-1}\log\left(\int_{\Rset}f^{\xi}(x)e^{(\xi-1)x^2}\,dx\right)+\frac{1}{2}\log\pi.
$$
Thus, in this particular case the inequalities \eqref{ineq:E-M-like} and \eqref{ineq:Stam-like} become
\begin{equation}\label{ineq:E-M-like-G}
\left(\int_{\Rset}f^{\xi(\lambda,\alpha)}(x)e^{\alpha(\lambda-1)x^2}\,dx\right)^{\frac{1}{\alpha(\lambda-1)}}
\left(\int_{\Rset}\left|\int_0^xf^{1-\alpha}(t)e^{-\alpha t^2}\,dt\right|^{p^*}f(x)\,dx\right)^{\frac{1}{p^*\alpha}}\geq K,
\end{equation}
respectively, letting for simplicity $\beta=\lambda$,
\begin{equation}\label{ineq:Stam-like-G}
\left(\int_{\Rset}f^{\xi(\lambda,\alpha)}(x)e^{\alpha(\lambda-1)x^2}\,dx\right)^{\frac{1}{\alpha(1-\lambda)}}
\left(\int_{\Rset}f^{1+p(\lambda\alpha-1)}(x)e^{p\lambda\alpha x^2}\left|\frac{f'}{f}+2x\right|^p\,dx\right)\geq K,
\end{equation}
where $K$ denotes a generic positive constant which can be made explicit in view of the general constants of the inequalities \eqref{ineq:E-M-like}, \eqref{ineq:Stam-like} and the factors of powers of $\pi$ appearing in the functionals for $h=G$.

\medskip

If $\alpha\neq1$ and $\lambda\neq1$, by particularizing \eqref{eq:min_gen} to $h(x)=G(x)$, we can thus express the minimizer of the inequalities \eqref{ineq:E-M-like-G} and \eqref{ineq:Stam-like-G} as follows:
\begin{equation*}
f_{\rm min}(x)=\frac{(ra_{p,\lambda})^{\frac{1}{\alpha}}}{\sqrt{\pi}}\left\{\begin{array}{ll}
\left(\cos_{\alpha^*(1-\lambda),p^*}^{\frac{1}{1-\alpha}} \left(\frac12C(p,\lambda,\alpha){\rm erf}(x)\right)\right)e^{-x^2}, & {\rm if} \ \lambda>1, \\[2mm]
\left(\cosh_{\alpha^*(1-\lambda),p^*}^{\frac{1}{1-\alpha}}\left(\frac12C(p,\lambda,\alpha){\rm erf}(x)\right)\right)e^{-x^2}, & {\rm if} \ \lambda<1,
\end{array}\right.
\end{equation*}
while for $\lambda=1$ and $\alpha\neq1$, the minimizer is deduced by particularizing \eqref{eq:min_gen1} to $h(x)=G(x)$, namely
$$
f_{\rm min}(x)=\frac{(ra_{p,1})^{\frac1\alpha}}{\sqrt{\pi}}\exp\left[-\frac{\alpha^*}{\alpha}\gamma^{-1}\left(\frac{1}{p^*}, \frac{1}{2}C(p,1,\alpha){\rm erf}(x) \right)\right]e^{-x^2}.
$$
Finally, for $\alpha=1$, the minimizer is deduced by letting $h(x)=G(x)$ in \eqref{eq:min_gen_a1}, that is,
$$
f_{\rm min}(x)=\frac{r}{\sqrt{\pi}}g_{p,\lambda}\left(\frac{r}{2}{\rm erf}(x)\right)e^{-x^2}.
$$

\subsection{A monotonicity property for $p=2$ and $\lambda=1$}

We extend below a monotonicity property of the Fisher-Shannon complexity measure with respect to the convolution operator with a Gaussian function, to a new relative Fisher-Shannon complexity measure constructed starting from our relative differential-escort transformation. Given two probability density functions $f$ and $h$ satisfying \eqref{cond:support}, we introduce thus the relative Fisher-Shannon complexity measure
\begin{equation}\label{def:CFS}
C_{FS,\alpha}^{[h]}[f]=C_{FS,\alpha}[f||h]:=C_{FS}(\mathfrak{R}_{\alpha}^{[h]}[f])=e^{-D[f||h]}\phi_{2,\alpha}[f||h],
\end{equation}
where $C_{FS}$ is the standard Fisher-Shannon complexity measure. Denoting by
$$
G_{\tau}(x):=\frac{1}{\sqrt{2\pi\tau}}e^{-\frac{x^2}{2\tau}}, \quad \mathfrak{C}_{\tau}^{G}[f]:=G_{\tau}* f
$$
the Gaussian density function with variance $\tau$ and the convolution operator with this Gaussian density, respectively, we have the following monotonicity property.
\begin{proposition}\label{prop:monot}
In the previous conditions and notation, we have
$$
C_{FS,\alpha}^{[h]}[\mathfrak{R}_{\alpha}^{-1,[h]}\mathfrak{C}_{\tau}^{G}\mathfrak{R}_{\alpha}^{[h]}[f]]\leq C_{FS,\alpha}^{[h]}[f].
$$
Moreover, when $f=\mathfrak{R}_{\alpha}^{-1,[h]}[G_{\nu}],$ for an arbitrary variance $\nu,$ then the equality holds for any value of $\alpha.$
\end{proposition}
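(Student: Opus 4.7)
The plan is to reduce the statement entirely to the classical monotonicity of the Fisher--Shannon complexity under convolution with a Gaussian, exploiting that $C_{FS,\alpha}^{[h]}$ is, by its very definition \eqref{def:CFS}, the composition of the standard $C_{FS}$ with the relative differential-escort transformation $\mathfrak{R}_\alpha^{[h]}$. Setting $g:=\mathfrak{R}_\alpha^{[h]}[f]$ and using that $\mathfrak{R}_\alpha^{[h]}$ and $\mathfrak{R}_\alpha^{-1,[h]}$ are mutual inverses (a direct consequence of Definition~\ref{def:inverse} together with the composition rules \eqref{eq:comp}--\eqref{eq:comp_inverse}), both sides collapse to purely classical quantities:
\begin{equation*}
C_{FS,\alpha}^{[h]}\bigl[\mathfrak{R}_\alpha^{-1,[h]}\mathfrak{C}_\tau^G\mathfrak{R}_\alpha^{[h]}[f]\bigr]=C_{FS}(G_\tau* g), \qquad C_{FS,\alpha}^{[h]}[f]=C_{FS}(g).
\end{equation*}

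The inequality in Proposition~\ref{prop:monot} therefore reduces to the classical bound $C_{FS}(G_\tau* g)\leq C_{FS}(g)$, which is precisely the monotonicity of the Fisher--Shannon complexity measure along the Gaussian heat semigroup proved in~\cite{Rudnicki2016}. This is the single analytic input required; everything else is formal. The main (and rather mild) technical point in the reduction is to verify carefully the identity $\mathfrak{R}_\alpha^{[h]}\mathfrak{R}_\alpha^{-1,[h]}=\mathcal{I}$ on the convolved density $G_\tau\ast g$, which amounts to checking the finiteness condition $N_{(\alpha-1)/\alpha}[G_\tau\ast g]<\infty$ from Definition~\ref{def:inverse}. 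This follows from the fact that convolving with a Gaussian preserves the Gaussian-like tails and hence the finiteness of all Rényi entropy powers of $g$.

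For the equality statement, if $f=\mathfrak{R}_\alpha^{-1,[h]}[G_\nu]$ then $g=G_\nu$ and $G_\tau* G_\nu=G_{\tau+\nu}$ is again a centered Gaussian. Since any two centered Gaussians on the real line are related by a scaling transformation and $C_{FS}$ is scaling-invariant, one has $C_{FS}(G_{\tau+\nu})=C_{FS}(G_\nu)$, so both sides of the inequality agree for every admissible $\alpha$, which concludes the argument.
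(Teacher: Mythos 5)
Your proposal is correct and follows essentially the same route as the paper: both reduce the claim to the classical monotonicity $C_{FS}[\mathfrak{C}_{\tau}^{G}[g]]\leq C_{FS}[g]$ of \cite{Rudnicki2016} via the identity $C_{FS,\alpha}^{[h]}[\mathfrak{R}_{\alpha}^{-1,[h]}[g]]=C_{FS}[g]$, and both settle the equality case by noting that the convolution of two centered Gaussians is again a centered Gaussian, on which $C_{FS}$ takes its common minimal value. Your explicit check of the finiteness condition $N_{(\alpha-1)/\alpha}[G_{\tau}*g]<\infty$ needed for the inverse transformation is a small technical point the paper leaves implicit.
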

\begin{proof}
The cornerstone of the proof is the following monotonicity property for the standard Fisher-Shannon complexity, established in \cite{Rudnicki2016}:
\begin{equation}\label{eq:interm5}
C_{FS}[\mathfrak{C}_{\tau}^{G}[f]]\leq C_{FS}[f].
\end{equation}
We also deduce from \eqref{def:CFS} that
\begin{equation}\label{eq:interm6}
C_{FS}[f]=C_{FS,\alpha}^{[h]}[\mathfrak{R}_{\alpha}^{-1,[h]}[f]].
\end{equation}
Putting together Eqs. \eqref{eq:interm5} and \eqref{eq:interm6}, we can write
\begin{equation*}
\begin{split}
C_{FS,\alpha}^{[h]}[\mathfrak{R}_{\alpha}^{-1,[h]}\mathfrak{C}_{\tau}^{G}\mathfrak{R}_{\alpha}^{[h]}[f]]&=C_{FS}[\mathfrak{C}_{\tau}^{G}\mathfrak{R}_{\alpha}^{[h]}[f]]\\
&\leq C_{FS}[\mathfrak{R}_{\alpha}^{[h]}[f]]=C_{FS,\alpha}^{[h]}[f],
\end{split}
\end{equation*}
as claimed. If $f=\mathfrak{R}_{\alpha}^{-1,[h]}[G_{\nu}]$, then the left side of the inequality in Proposition~\ref{prop:monot} reads
$$
C_{FS,\alpha}^{[h]}[\mathfrak{R}_{\alpha}^{-1,[h]}\mathfrak{C}_{\tau}^{G}\mathfrak{R}_{\alpha}^{[h]}[\mathfrak{R}_{\alpha}^{-1,[h]}[G_{\nu}]]]
=C_{FS}[\mathfrak{C}_{\tau}^{G}[G_{\nu}]]=C_{FS}[G_{\widetilde\nu}],
$$
with $\widetilde{\nu}=\sqrt{\tau^2+\nu^2}$, while the right side reads
$$
C_{FS,\alpha}^{[h]}[\mathfrak{R}_{\alpha}^{-1,[h]}[G_{\nu}]]=C_{FS}[G_{\nu}].
$$
It is well known that $C_{FS}[G_{\nu}]=C_{FS}[G_{\widetilde\nu}]$ and their common value is the minimal value of the Fisher-Shannon complexity measure~\cite{Rudnicki2016}, concluding the proof.
\end{proof}

\section{An inverse problem: adapted sharp inequalities}\label{sec:inverse}

In this section, we apply the previous results to solve the inverse problem of finding the reference function $h^\star$ and construct inequalities whose minimizing density, which we denote by $f^\star$, is fixed \emph{a priori}. An interesting fact related to these inequalities is that the generalized trigonometric functions play again a central role, in a somehow similar manner as in the example when the reference density is the Gaussian function in Section \ref{sec:Gauss}. Let us begin by imposing that the minimizing density is
$$
f_{min}=\mathfrak{R}_\alpha^{-1,h}[g_{p,\lambda}]=:f^\star.
$$
We then have
$$
h(x)=f^\star(x)(g_{p,\lambda})_{(r)}^{-\frac1\alpha}(y(x))
$$
and thus
$$
y'(x)=(g_{p,\lambda}(y))_{(r)}^{-\frac{1}\alpha^*}h(x)=(g_{p,\lambda}(y))_{(r)}^{-1}f^\star(x).
$$
We then readily obtain by integration
\begin{equation}\label{eq:cond_star}
\int r g_{p,\lambda}(ry)\,dy=\int f^*(x)dx\equiv F^\star(x),
\end{equation}
where $F^\star$ denotes the cumulative function of $f^\star$. In the next lines we derive an explicit expression for the change of variable.

\medskip

$\bullet$ \textbf{Case $\lambda\neq1$.} Let us assume first, for simplicity, that $\lambda>1$. We start by observing that
$$
\int_0^y r g_{p,\lambda}(rt)\,dt
=\frac{a_{p,\lambda}}{|1-\lambda|^\frac 1{p^*}}\int_0^{r|1-\lambda|^\frac 1{p^*} y} (1-t^{p^*})^{\frac1{\lambda-1}}dt
=\frac{a_{p,\lambda}}{|1-\lambda|^\frac 1{p^*}}\arcsin_{1-\lambda,p^*}\left(r|1-\lambda|^\frac 1{p^*} y\right),
$$
where $\arcsin_{v,w}$ denotes the generalized arcsin function, see Section \ref{sec:GTFs}. We readily deduce that
$$
|1-\lambda|^\frac 1{p^*} ry(x)=
\sin_{1-\lambda,p^*}\left(\frac{|1-\lambda|^\frac 1{p^*}}{a_{p,\lambda}} F^\star(x)\right).
$$
We continue by computing $g_{p,\lambda}(y(x))$, taking into account the generalized trigonometric identity \eqref{eq:pyth}. To simplify the notation, we set
$$
z(x):=\frac{|1-\lambda|^\frac 1{p^*}}{a_{p,\lambda}} F^\star(x).
$$
We then have
\begin{eqnarray*}
r g_{p,\lambda}(ry(x))&=& ra_{p,\lambda}\left(1-\sin_{1-\lambda,p^*}^{p^*}(z(x))\right)^{\frac1{\lambda-1}}
=ra_{p,\lambda}\left(\cos_{1-\lambda,p^*}^{1-\lambda}(z(x))\right)^{\frac1{\lambda-1}}
\\
&=&ra_{p,\lambda}\left(\cos_{1-\lambda,p^*}\left(\frac{|1-\lambda|^\frac 1{p^*}}{a_{p,\lambda}} F^\star(x)\right)\right)^{-1}\equiv
\mathcal C_{p,\lambda}(f^\star;x)^{-1}.
\end{eqnarray*}
Keeping the previous notation, it follows that
\begin{equation}\label{eq:hstar}
h^\star(x)=f^\star(x)\mathcal C_{p,\lambda}(f^\star;x)^{\frac1\alpha}.
\end{equation}
We continue by calculating the Kullback-Leibler divergence relative to $h^\star$, employing \eqref{eq:hstar} to find
\begin{equation}\label{def:kullback-leibler_star}
D[f||h^\star]:= \int_{\Omega}  f(t)\log\left[\frac{f(t)}{f^\star(t)}\right]\,dt-\frac{1}{\alpha}\left\langle\log \mathcal C_{p,\lambda}(f^\star;x)\right\rangle
=D[f||f^\star]-\frac{1}{\alpha}\left\langle\log \mathcal C_{p,\lambda}(f^\star;x)\right\rangle.
\end{equation}
We next introduce the following notation
\begin{equation}\label{def:star}
\begin{split}
&\sigma^\star_{p,\alpha}[f^\star;f]:=\sigma_{p, \alpha}[f||h^\star]=\left[\int_{\Omega}\left|\int_{x_i}^xf(s)^{1-\alpha}(f^\star(x))^\alpha\mathcal   C_{p,\lambda}(f^\star;x)\,ds\right|^pf(x)\,dx\right]^{\frac1{p\alpha}}
\\
&\phi^\star_{p,\beta,\lambda,\alpha}[f^\star;f]:=\phi_{p,\beta}[f||h^\star]\\
&=\left[\int_{\Omega}\frac{f^{1+p(\beta-1)}}{(f^\star)^{\beta p}\mathcal C_{p,\lambda}(f^\star;x)^{\frac {p\beta}\alpha}}
\left|\frac d{dx}\left[\log\frac{f^\star(x)\mathcal C_{p,\lambda}(f^\star;x)^{\frac1\alpha}}{f}\right]\right|^p\,dx\right]^\frac1{p\beta},
\\
&K^\star_{\xi, p,\lambda,\alpha}[f^\star;f]:=e^{D_\xi[f||h^\star]}=\left[\int_{\Omega}f^\xi(x)(f^{\star}(x))^{1-\xi}\mathcal C_{p,\lambda}(f^\star;x)^{\frac{1-\xi}\alpha}dx\right]^{\frac1{\xi-1}}.
\end{split}
\end{equation}
In view of these notations, the inequality~\eqref{ineq:E-M-like} becomes
\begin{equation}\label{ineq:E-M-like_star}
K^\star_{\xi(\lambda,\alpha), p,\lambda,\alpha}[f^\star;f]\sigma_{p^*,\alpha}^\star[f^\star;f]\geq (K^{(0)}_{p,\lambda})^{\frac1\alpha},
\end{equation}
while the Stam inequality~\eqref{ineq:Stam-like} is rewritten as
\begin{equation}\label{ineq:Stam-like_star}
\frac{\phi^\star_{p,\lambda\alpha,\lambda,\alpha}[f^\star;f]}{K^\star_{\xi(\lambda,\alpha), p,\lambda,\alpha}[f^\star;f]}
\geq\alpha^{\frac{-1}{\alpha\lambda}}\left(K^{(1)}_{p,\lambda}\right)^\frac1\alpha.
\end{equation}
Note that although the latter quantities look like divergence measures, they do not fulfill one of the basic properties of a divergence, that is, to be minimized in the case $f=f^\star$ (for example $K^\star_{\xi(\lambda,\alpha), p,\lambda,\alpha}[f^\star;f]$ is minimized when $f=h^\star$). However, both inequalities~\eqref{ineq:E-M-like_star} and~\eqref{ineq:Stam-like_star} are only minimized when $f=f^\star$. At the same time, one may easily check that both left hand side of inequalities~\eqref{ineq:E-M-like_star} and~\eqref{ineq:Stam-like_star} satisfy the invariance and boundedness properties required by some authors to the so-called statistical complexity measures~\cite{Rudnicki2016} whose minimizing density is $f^\star$. In the case $\lambda<1$, all the previous lines can be performed as well, but employing the hyperbolic counterparts of the generalized trigonometric functions.

\medskip

$\bullet$ \textbf{Case $\lambda=1$.} In this paragraph we explore analogous adapted inequalities with $\lambda=1$. Let us start with the simpler case $p=2$. In this case the condition~\eqref{eq:cond_star} is written as
$$
\frac{1}{\sqrt\pi}\int_{0}^{y} e^{-(rt)^2}r\,dt=\frac{{\rm erf}(r y)}2=F^*(x),
$$
that is,
$$
ry(x)={\rm erf}^{-1}\left(2F^*(x)\right).
$$
Then,
\begin{eqnarray*}
	h^\star(x)&=&f^\star(x)\left(G_{(r)}(y(x))\right)^{-\frac 1\alpha} =f^\star(x)\left( r G(ry(x))\right)^{-\frac 1\alpha} \\ &=&f^\star(x) r^{-\frac{1}{\alpha}}\pi^{\frac{1}{2\alpha }}\exp\left[\frac{1}{\alpha}\,{\rm erf}^{-1}\left(2 F^*(x)\right)^2\right]
	\\
	&=& f^\star(x)\mathcal C_{2,1}^{\frac1\alpha}(f^\star;x)
\end{eqnarray*}
where
\begin{equation*}
C_{2,1}(f^\star;x)=r^{-1}\pi^{\frac{1}{2}}\exp\left[{\rm erf}^{-1}\left(2 F^*(x)\right)^2\right].
\end{equation*}
In the case $p\neq2$, similar calculations lead to
\begin{equation*}
h^\star(x)=C_{p,1}^{\frac{1}{\alpha}}(f^\star;x)f^\star(x), \quad C_{p,1}(f^\star;x)=(ra_{p,1})^{-1}\exp\left[\gamma^{-1}\left(\frac{1}{p^*};\frac{p^* }{a_{p,1}}F^*(x)\right)\right].
\end{equation*}
The definitions~\eqref{def:kullback-leibler_star} and~\eqref{def:star} remain in force by simply replacing $C_{p,\lambda}$ by $C_{p,1}$ in the right hand side of them. With this notation the inequalities~\eqref{ineq:E-M-like-Shannon} and~\eqref{ineq:Stam-like-Shannon} become
\begin{equation}\label{ineq:E-M-like-Shannon_star}
\sigma_{p^*,\alpha}^\star[f^\star; f]e^{D[f||f^\star]}\geq (K^{(0)}_{p,1})^{1/\alpha}e^{\frac{1}{\alpha}\left\langle\log \mathcal C_{p,1}(f^\star;x)\right\rangle},
\end{equation}
and
\begin{equation}\label{ineq:Stam-like-Shannon_star}
\frac{\phi_{p,1,1,\alpha}^\star(f^\star;f)}
{e^{D[f||f^\star]-\frac{1}{\alpha}\left\langle\log \mathcal C_{p,1}(f^\star;x)\right\rangle}}
\geq\left(\frac{K^{(1)}_{p,1}}{\alpha}\right)^{1/\alpha}.
\end{equation}

We end this section by noticing the quite interesting fact that, in the previous adapted inequalities~\eqref{ineq:Stam-like_star} and~\eqref{ineq:Stam-like-Shannon_star}, each of the two  components involved in them is minimized only by $f=h^\star$, while the inequalities are minimized only by $f=f^\star$. For comparison, let us recall that, in the classical moment-entropy and Stam inequalities, each of the factors is minimized by a different density (the $p$-th deviation by a Dirac delta, the Rényi or Shannon entropy by the uniform distribution and the Fisher information by power-like and exponential densities), while in our case the minimizer of the two factors taken separately is the same function. The fact that the minimizer of the inequalities \eqref{ineq:Stam-like_star} and \eqref{ineq:Stam-like-Shannon_star} is a different function than this common minimizer of the components is thus an even more striking property of these inequalities.

\section{Conclusions}\label{sec:conclusion}

In this work we have introduced the relative differential-escort transformation, which is a completely non-trivial generalization of the differential-escort transformation, since it is related to a reference density function $h$. It is, in fact, a large family of transformations, as by modifying the density $h$ we obtain a different outcome. With the help of these transformations, we have introduced new relative measures such as the relative Fisher divergence and the relative cumulative moment. A remarkable novelty related to these measures is that they are invariant to scaling changes. Up to the best of our knowledge, such a scale invariance is not fulfilled by any of the previous definitions of Fisher divergence measures introduced in the literature, a drawback that prevented the researchers in establishing sharp inequalities connecting these measures with other informational functionals.

Once defined these measures and proved their invariance with respect to scaling changes, we established sharp Stam-like and moment-entropy like informational inequalities relating the previously mentioned measures with well-established divergences such as the R\'enyi divergence. Another remarkable fact related to these inequalities is that, when taking the limit $\lambda\to1$ in the R\'enyi divergence, we derive very simple and elegant inequalities involving classical quantities such as the Kullback-Leibler.

A number of particular examples of interest for the theory in the application of the relative differential-escort transformation are given, involving exponential densities or the Gaussian density function as reference function. In the latter case, the generalized trigonometric and hyperbolic functions appear in a surprising way. Moreover, we established in the paper some adapted informational inequalities by fixing the future minimizer $f^\star$ and solving the inverse problem of computing the reference function $h^\star$, as well as constructing functionals related to this previously fixed density function and the calculated reference function. Once more, the generalized trigonometric and hyperbolic functions, as well as other renowned special functions (such as the incomplete Gamma function) play an important role in the mathematical expression of the reference function $h^\star$ leading to the given, fixed minimizer $f^\star$.

We believe that many applications in information theory and other fields of science could be derived in the future by applying the relative differential-escort transformation with respect to suitably chosen reference functions $h$, according to each case of interest, and by employing the sharp informational inequalities established in this paper. The rather free choice of $h$ provides a wide amplitude of possibilities of getting new information about the density $f$. We thus propose this family of transformations as a possibly strong tool for further investigation.

\subsection*{Acknowledgements}
	
All authors are partially supported by the Grants PID2020-115273GB-I00 and RED2022-134301-T funded by \text{MCIN/AEI/10.13039/501100011033}. D. P.-C. is also partially supported by the Grant PID2023-153035NB-100.

\bigskip

\noindent \textbf{Data availability} Our manuscript has no associated data.

\bigskip

\noindent \textbf{Competing interest} The authors declare that there is no competing interest.


\bibliographystyle{unsrt}
\bibliography{refs}

\end{document}